\PassOptionsToPackage{table}{xcolor}
\documentclass[acmsmall]{acmart}


\usepackage[table]{xcolor}
\usepackage{comment}
\usepackage[linesnumbered,ruled,vlined]{algorithm2e}
\usepackage{algpseudocode}  
\usepackage{textgreek}
\usepackage{tablefootnote}
\usepackage{multirow}
\usepackage{overpic}
\usepackage{subfigure}
\usepackage{enumitem}
\usepackage{bm}
\usepackage{tabu}
\usepackage{graphicx}
\usepackage{booktabs}
\usepackage{array}
\usepackage{etoolbox}
\usepackage{subfloat}


\newcommand\ourmethod{\textsc{Vertiorizon}}

\definecolor{ao(english)}{rgb}{0.0, 0.5, 0.0}
\newcounter{dingheng}
\numberwithin{dingheng}{section}

\newcounter{siqiang}
\numberwithin{siqiang}{section}

\AtBeginDocument{%
  \providecommand\BibTeX{{%
    \normalfont B\kern-0.5em{\scshape i\kern-0.25em b}\kern-0.8em\TeX}}}

\newcommand\horizontal{horizontal scheme}
\newcommand\leveling{vertical scheme}
\newcommand\Horizontal{Horizontal Scheme}
\newcommand\Leveling{Vertical Scheme}

\newcommand\scheme{growth scheme}
\newcommand\Scheme{Growth Scheme}

\newtheorem{problem}{Problem}

\setcopyright{rightsretained}
\acmJournal{PACMMOD}
\acmYear{2025} \acmVolume{3} \acmNumber{3 (SIGMOD)} \acmArticle{173} \acmMonth{6} \acmPrice{}\acmDOI{10.1145/XXXXXXX}


\makeatletter
\gdef\@copyrightpermission{
  \begin{minipage}{0.2\columnwidth}
   \href{https://creativecommons.org/licenses/by/4.0/}{\includegraphics[width=0.90\textwidth]{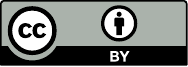}}
  \end{minipage}\hfill
  \begin{minipage}{0.8\columnwidth}
   \href{https://creativecommons.org/licenses/by/4.0/}{This work is licensed under a Creative Commons Attribution International 4.0 License.}
  \end{minipage}
  \vspace{5pt}
}
\makeatother

\begin{CCSXML}
<ccs2012>
   <concept>
       <concept_id>10003752.10003809.10010031</concept_id>
       <concept_desc>Theory of computation~Data structures design and analysis</concept_desc>
       <concept_significance>500</concept_significance>
       </concept>
   <concept>
       <concept_id>10002951.10002952.10002971</concept_id>
       <concept_desc>Information systems~Data structures</concept_desc>
       <concept_significance>500</concept_significance>
       </concept>
 </ccs2012>
\end{CCSXML}

\ccsdesc[500]{Theory of computation~Data structures design and analysis}
\ccsdesc[500]{Information systems~Data structures}

\begin{document}

\title{How to Grow an LSM-tree? }
\subtitle{Towards Bridging the Gap Between Theory and Practice}


\author{Dingheng Mo}
\affiliation{%
  \institution{Nanyang Technological University}
  \country{Singapore}
}
\email{dingheng001@ntu.edu.sg}

\author{Siqiang Luo}
\affiliation{%
  \institution{Nanyang Technological University}
  \country{Singapore}
}
\email{siqiang.luo@ntu.edu.sg}

\author{Stratos Idreos}
\affiliation{%
  \institution{Harvard University}
  \country{USA}
}
\email{stratos@seas.harvard.edu}

\begin{abstract}
LSM-tree based key-value stores are widely adopted as the data storage backend in modern big data applications. 
The LSM-tree {\it grows} with data ingestion, by either adding levels with fixed level capacities (dubbed as vertical scheme) or increasing level capacities with fixed number of levels (dubbed as horizontal scheme). 
The vertical scheme leads the trend in recent system designs in RocksDB, LevelDB, and WiredTiger, whereas the horizontal scheme shows a decline in being adopted in the industry. The growth scheme profoundly impacts the LSM system performance in various aspects such as read, write and space costs. This paper attempts to give a new insight into a fundamental design question -- {{\it how to grow an LSM-tree} to attain more desirable performance?}

Our analysis highlights the limitations of the vertical scheme in achieving an optimal read-write trade-off and the horizontal scheme in managing space cost effectively. Building on the analysis, we present a novel approach, {\ourmethod}, which combines the strengths of both the vertical and horizontal schemes to achieve a superior balance between lookup, update, and space costs. Its adaptive design makes it highly compatible with a wide spectrum of workloads. Compared to the vertical scheme, {\ourmethod} significantly improves the read-write performance trade-off. In contrast to the horizontal scheme, {\ourmethod} greatly extends the trade-off range by a non-trivial generalization of Bentley and Saxe's theory~\cite{bentley1980decomposable}, while substantially reducing space costs. When integrated with RocksDB, {\ourmethod} demonstrates better write performance than the vertical scheme, while incurring about six times less additional space cost compared to the horizontal scheme.

\end{abstract}

\maketitle

\section{Introduction}
\label{sec: intro}
LSM-tree is a key-value data structure that organizes data into multiple increasing levels. 
It appends incoming data entries into the buffer, whose filled-up triggers a merge-sort of the buffer into the next level. 
The process can cascade down whenever the affected level reaches its capacity to ensure desired read performance and minimize the space amplification.
Due to its desired performance, the LSM-tree is extensively employed as the backbone of various data storage systems. This includes its utilization in key-value stores (KV-stores) like RocksDB~\cite{rocksdb},  Cassandra~\cite{lakshman2010cassandra}, and ScyllaDB~\cite{scylladb}, NewSQL stores such as CockroachDB~\cite{cockroachdb}, as well as time-series databases like InfluxDB~\cite{influxdb}.

\begin{figure}[t]
    \centering
    \includegraphics[width=0.6\textwidth]{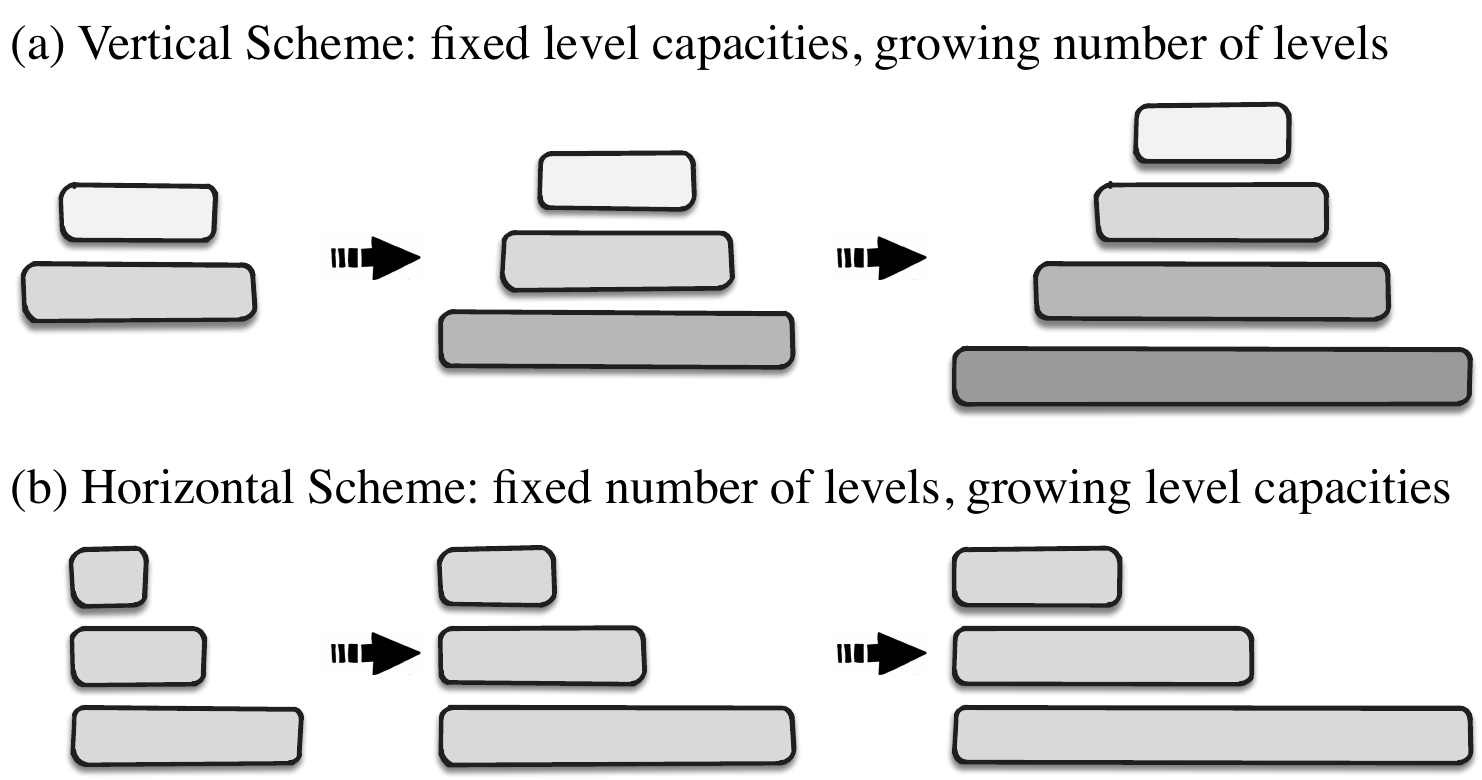}
    \vspace{0mm}
    \caption{Different growth schemes of LSM-trees.}
    \label{fig: intuition}
    \vspace{-2mm}
\end{figure}


\vspace{1mm}
\noindent\textbf{Vertical and Horizontal Growth Schemes.}
The growth scheme governs how LSM-trees expand and maintain their hierarchical structure amidst the continuous influx of new data entries.
Specifically, the LSM-tree relies on compactions to transfer data from one level to the next, ensuring that each level’s size is significantly larger than the preceding one. The growth scheme fundamentally controls the timing of these compactions at each level.
Currently, the growth schemes adopted in mainstream LSM-based key-value stores fall into two major categories, the {\leveling} and the {\horizontal}.
With the {\leveling}, data ingestion into the LSM-tree would expand it vertically by creating additional levels. Specifically, each LSM-tree level has a fixed capacity that increases exponentially with the number of levels, as Figure~\ref{fig: intuition} (a) depicts. Compactions are initiated from one level to the subsequent level as soon as the level is full.
Under the {\horizontal}, the total number of levels on the disk is confined to a fixed value. As illustrated in Figure~\ref{fig: intuition} (b), each level grows horizontally with a heuristic (see Section~\ref{section: analysis}) that generally maintains the relative capacity between levels, implying that their capacities would gradually expand as data flows in. Beyond the last level, each level initiates a compaction into the subsequent level after a certain number of compactions from its preceding level. The required number of compactions gradually increases as the levels grow. 

\begin{table}[h]
\vspace{-1mm}
\renewcommand\arraystretch{1.2}
\setlength{\abovecaptionskip}{-0.5cm}
\setlength{\belowcaptionskip}{0cm}
\scalebox{0.92}{
\begin{tabular}{|c|c|c|}
\hline
\textbf{Growth Scheme} & \textbf{Vertical} & \textbf{Horizontal} \\ \midrule
\textbf{Leveling Merge} & \begin{tabular}[c]{@{}c@{}}LevelDB, RocksDB, \\ WiredTiger, BadgerDB ...\end{tabular} & \begin{tabular}[c]{@{}c@{}}BigTable, HBase,\\ AsterixDB\end{tabular} \\ \hline
\textbf{Tiering Merge} & Cassandra, ScyllaDB & Not exist \\ \hline
\end{tabular}
}
\vspace{0.7mm}
\caption{The landscape of the two growth schemes.}\label{tab:vacancy}
\vspace{0.8mm}
\end{table}

\vspace{1mm}
\noindent\textbf{{The Decline of Horizontal Scheme V.S. Its Desired Theoretical Property.}} 
The horizontal scheme had been adopted by many pioneering key-value storage systems, including Bigtable ~\cite{chang2008bigtable}, HBase~\cite{george2011hbase}, and AsterixDB~\cite{alsubaiee2014asterixdb}. 
However, a noteworthy trend in recent ten years is that most of the new generation key-value storage systems employ the vertical scheme, such as LevelDB~\cite{leveldb}, RocksDB~\cite{rocksdb}, WiredTiger~\cite{wiredtiger}, Cassandra~\cite{lakshman2010cassandra}, ScyllaDB~\cite{scylladb}, and BadgerDB~\cite{badgerdb}, as Table~\ref{tab:vacancy} summarizes.
Furthermore, a majority of the prominent NoSQL databases in recent times, such as CockroachDB~\cite{cockroachdb}, TiDB~\cite{tidb}, Dgraph~\cite{dgraph}, and PolarDB~\cite{polardb}, have opted for the vertical scheme in their storage backend. This is probably attributed to that the vertical scheme offers the advantage of being easier to manage, and it allows for a finer compaction granularity to mitigate space amplification and write stalls. Contrary to this trend, interestingly, an early theory back in 1980's~\cite{bentley1980decomposable} revealed that the horizontal-growth-style data structure could offer an optimal trade-off between read and write costs even before the formal invention of the LSM-tree, and this theory is further explicitly discussed in the context of LSM-trees by Mathieu et al.~\cite{mathieu2014bigtable}. This controversy strongly indicates that the horizontal growth scheme might be undervalued in practical system design choices.

\vspace{1mm}
\noindent
{\color{black}
{\bf The Problem: The Imperfect LSM-tree Growth Scheme Hurts System Performance.} How to grow (a.k.a., expand) the LSM-trees with fast data writes is among the most critical design choices, profoundly impacting the performance of LSM-tree based systems. {For instance, in a cloud environment using LSM-based storage, the chosen growth scheme impacts the system latency, write amplification, and space cost during data ingestion. A suboptimal growth scheme leads to higher write amplification, requiring more disk scans, which shortens hardware lifespan and raises cloud costs. Additionally, ineffective growth schemes result in increased latency and space usage, driving up costs for users storing and querying data. 
This is because the growth scheme dictates the timing of compactions, and inappropriate compaction timing can lead to higher write and read amplification (as we detail later in Sections 3 and 4), thereby increasing latency. Furthermore, different schemes support varying compaction granularities, while full compaction in particular may cause high space amplification, resulting in increased storage cost. Existing growth schemes thus face limitations that hinder system scalability in various ways. 
The vertical scheme, used by RocksDB and LevelDB, is not optimal in balancing read-write trade-offs because it performs compaction at each level with a fixed frequency, leading to reduced system throughput as data scales.
The horizontal scheme, employed by BigTable, has only been implemented with the leveling merge policy, making it difficult to handle update-intensive workloads—a common workload type in modern systems. It also requires full compaction during merges, resulting in high space overhead. This motivates us to revisit the strengths and weaknesses of the two growth schemes, and ask for the possibility of better designs.}}

{\small
\begin{table}[tb]
\vspace{1mm}
\setlength{\abovecaptionskip}{-0.5cm}
\setlength{\belowcaptionskip}{0cm}
\renewcommand{\arraystretch}{1.05}
\setlength{\tabcolsep}{3.5pt}
\scalebox{0.9}{
\begin{tabular}{|c|c|c|c|c|c|}
\hline
\multicolumn{1}{|l|}{\textbf{Designs}} &
  \begin{tabular}[c]{@{}c@{}}{Vertical-}\\{Leveling}\end{tabular} &
  \begin{tabular}[c]{@{}c@{}}{Vertical-}\\{Tiering}\end{tabular} &
  \begin{tabular}[c]{@{}c@{}}{Horizontal-}\\{Leveling}\end{tabular} &
  \begin{tabular}[c]{@{}c@{}}{Horizontal-}\\{Tiering {\bf{(ours)}}}\end{tabular} &  
  \begin{tabular}[c]{@{}c@{}}{\ourmethod}\\{\bf{(ours)}}\end{tabular} \\ \hline
 \begin{tabular}[c]{@{}c@{}}{Update-}\\{Heavy}\end{tabular} &
  \scalebox{0.8}{$\bigstar$} &
  \scalebox{0.8}{$\bigstar\bigstar\bigstar\bigstar$} &
  \scalebox{0.8}{$\bigstar\bigstar\bigstar$} &
  \scalebox{0.8}{$\bigstar\bigstar\bigstar\bigstar\bigstar$} &
  \scalebox{0.8}{$\bigstar\bigstar\bigstar\bigstar\bigstar$} \\ \hline
 \begin{tabular}[c]{@{}c@{}}{Lookup-}\\{heavy}\end{tabular} &
  \scalebox{0.8}{$\bigstar\bigstar\bigstar\bigstar\bigstar$} &
  \scalebox{0.8}{$\bigstar\bigstar$} &
  \scalebox{0.8}{$\bigstar\bigstar\bigstar\bigstar\bigstar$} &
  \scalebox{0.8}{$\bigstar\bigstar\bigstar$} &
  \scalebox{0.8}{$\bigstar\bigstar\bigstar\bigstar\bigstar$} \\ \hline
Space &
  \scalebox{0.8}{$\bigstar\bigstar\bigstar\bigstar$} &
  \scalebox{0.8}{$\bigstar\bigstar\bigstar\bigstar$} &
  \scalebox{0.8}{$\bigstar\bigstar\bigstar$}&
  \scalebox{0.8}{$\bigstar\bigstar\bigstar$} &
  \scalebox{0.8}{$\bigstar\bigstar\bigstar\bigstar$} \\ \hline
\end{tabular}
}
\vspace{1mm}
\captionof{table}{Comparing existing growth schemes and ours.}
\label{tab:structure_analysis}
\vspace{-1mm}
\end{table}
}

\vspace{1mm}
\noindent
{\bf Key Intuition 1.}
{There is an increasing need for a systematic discussion on the foundational topic of LSM-tree growth schemes, due to the growing scale and complexity of LSM-based systems.}
One important recent design is to adopt the tiering merge policy (i.e., each level is compacted as a new sorted run in the subsequent level), rather than the default leveling merge policy (i.e., each level is compacted to the existing sorted run in the subsequent level), to improve write performance at the expense of read performance. 
As demonstrated by Dayan and Idreos~\cite{dostoevsky2018}, the vertical growth scheme can be easily combined with the tiering policy, increasing its flexibility in balancing read-write trade-offs. However, the integration of the horizontal scheme with the tiering merge policy remains unexplored, and no such design has been adopted previously (Table~\ref{tab:vacancy}). As a result, the horizontal scheme offers limited flexibility in balancing read and write performance, which is particularly undesirable for workloads with frequent updates. Our first intuition is, {\it what if we can design a horizontal-tiering scheme that couples horizontal scheme with tiering merge policy?} Such a design, if feasible, could significantly elevate the status of the horizontal scheme and revive interest in this largely overlooked approach. However, creating an effective horizontal-tiering scheme is notoriously difficult and has not been accomplished since the invention of LSM-trees. 

\vspace{1mm}
\noindent
{\bf Key Intuition 2. } Since the strengths and weaknesses of the two major growth schemes complement each other, our second intuition is {\it whether it is possible to combine the two to achieve the best of both worlds}. There are countless potential ways to hybridize these schemes, but ensuring the combination fully leverages the strengths of both is challenging. Moreover, the optimal growth scheme often depends on the specific workload. For example, a read-intensive workload might favor a leveling-based scheme over a tiering-based one. This implies the necessity of a growth scheme that can be adaptive to different workloads, adding to the sophistication in hybridizing the schemes.

\vspace{1mm}
\noindent\textbf{{Our Proposal.}} 
We present two novel techniques, namely Horizontal-Tiering and {\ourmethod}
, to address the challenges. Horizontal-tiering enhances the horizontal scheme with the integration of the tiering policy (response to Intuition 1). This technique will revoke the recognition of the largely overlooked horizontal scheme, and thus highly impact the landscape of LSM-tree growth schemes. On top of this enhanced scheme, we further propose a new design {\ourmethod}, which is a novel self-designing LSM-tree growth scheme that harnesses the strengths of both vertical and horizontal schemes, obtaining both theoretical merits and desired practical performance (response to Intuition 2). As shown in Figure~\ref{exp: tradeoff and embed}~(a), Horizontal-Tiering effectively expands the optimal read-write trade-off to a much wider spectrum. As shown in Table~\ref{tab:structure_analysis}, compared to existing schemes like Vertical-Leveling, Vertical-Tiering, and Horizontal-Leveling, 
{\ourmethod} offers very robust performance across all dimensions. We summarize our contributions as follows. 
\begin{itemize}[leftmargin=*]
\vspace{-1mm}
    \item  
    We carefully review the pros and cons of the two major growth schemes for LSM-trees, and motivates the fundamental problem in designing better growth schemes. 
    \item We present the horizontal-tiering scheme, which is the first horizontal scheme that can be effectively applied under the tiering merge policy since the invention of LSM-trees. 
    We also theoretically show that our designed horizontal-tiering scheme achieves the minimum accumulative cost during compaction. This is a non-trivial generalization of Bentley and Saxe's theory~\cite{bentley1980decomposable}.

    \item   We present {\ourmethod}, an innovative growth scheme that embodies the strengths of both the vertical and horizontal schemes, which also offers a nice self-designing feature to work in an adaptive manner with different workloads.

    \item We design a novel algorithm to adapt our design to work under more skewed workloads. This is accompanied with a deep analysis to justify the rationale behind such designs.

    \item  We present typically how {\ourmethod} can be embedded in a mature LSM-system by introducing a novel dynamic Bloom filter layout tailored to our design. 


    \item We carry out extensive experimental assessments of {\ourmethod} in real-world system environments. We demonstrate that {\ourmethod} achieves up to 3.2 times throughput than the vertical scheme (employed by RocksDB), and incurs about 6 times less additional space amplification compared to the horizontal scheme.

\end{itemize}

\vspace{-3mm}
\section{Background}
\label{section: background}

\noindent\textbf{LSM-tree.}
The LSM-tree~\cite{o1996log} is a persistent multi-level index structure designed to store key-value entries. 
An LSM-tree achieves efficient write performance by transforming random writes to the disk into sequential writes.
Specifically, when a key-value entry is inserted, the LSM-tree first temporarily stores it in a main memory buffer. Once the buffer reaches its capacity, all entries within it are merged and sorted by keys into the first level on the disk. 
To gain better write performance,
LSM-tree organizes data across various sorted levels on the disk, arranged in ascending order of size. When certain conditions are met, the LSM-tree would merge one level into the subsequent level through compaction, continuing in this manner up to the largest level. In this way, the write amplification associated with merging data from the buffer or a specific level to the subsequent one is duly mitigated.
Write amplification quantifies the average count of physical rewrites for each entry, due to the recurrent data rewrites to the storage driven by the compaction process inherent to the LSM-tree.
Read amplification measures the average number of disk pages read for a single lookup, as a search might scan through several sorted runs within an LSM-tree to find a specific key. Space amplification evaluates the additional space cost needed on average for each unique entry.


\vspace{0.5mm}
\noindent\textbf{Full and Partial Compactions.}
Usually, a compaction merges and sorts the full data from one level with the next, known as full compaction. However, in LSM-trees, a deeper level could hold vast amounts of data, and full compaction in such cases can cause major performance issues. 
This not only slows down the system but also leads to substantial space amplification, as original data cannot be disposed of until compaction is fully done. The partial compaction mitigates these problems, allowing merging only a small fraction of the data with the subsequent level in each compaction. 

\vspace{0.5mm}
\noindent\textbf{Tiering and Leveling Merge Policy.}
Under the leveling merge policy, a compaction from Level \( i \) to Level \( i+1 \) merge-sorts all data from Level \( i \) with all the existing data in Level \( i+1 \). Additionally, each buffer flush merge-sorts its data with the existing data in the first level. In contrast, the tiering merge policy handles compactions or buffer flushes to a level by adding incoming data as a new separate sorted run in that level, without merging it with the existing data there. In this way, each entry is involved in only one compaction at each level. Generally, the leveling merge policy is more read-friendly, while tiering performs better in write-intensive scenarios. 

\vspace{0.5mm}
\noindent\textbf{Point Lookup and Range Lookup.}
Point-lookup locates the entry for a given key. During the lookup, the sorted run is probed from the newest to the oldest, until the entry containing the queried key is found. Modern LSM-based systems typically store the starting key of each disk block as an index in memory. Since each LSM-tree run is sorted, this setup allows for a direct pinpoint of the sole block likely to contain a particular key, ensuring a maximum of one I/O for each lookup at every sorted run. This process is further accelerated by applying Bloom filter at each sorted run. The filters will be loaded into the main memory when the system starts.
To check whether a key exists in a run, the Bloom filter associated with the run is first probed. An I/O is incurred only when the filter returns true, otherwise the access to the sorted run can be exempted. In contrast, range-lookup aims to extract the entries for a given key range. The process is similar to point-lookup, except that (1) Bloom filter is not applied; and (2) every sorted run is accessed.

{
\vspace{0.5mm}
\noindent\textbf{Read-write Trade-off.} 
The “read-write trade-off” in an LSM-tree is the balance between lookup and update costs. By changing its structure, the LSM-tree can optimize for either lookups or updates, but improving one comes at the expense of the other. Different growth schemes produce different trade-off curves, showing how much one cost must be sacrificed to improve the other. A design with a better read-write trade-off requires less compromise in one cost to achieve a certain performance level in the other.
}

\section{Analysis of {\Scheme}s}
\label{section: analysis}

\setlength{\textfloatsep}{8pt}
{
\begin{algorithm}[b]
  \caption{\Horizontal}
  \label{alg: horizontal}
    \KwIn{
      Maximum number of levels $\ell$
    }
    \For{$i=1\ to\ \ell$}{
        $C_i\gets 0$
    }
    \For {each flush from buffer}{
        $C_1\gets C_1+1$;\\
        \For{$i=1\ to\ \ell-1$}{
            \If{$C_i>C_{i+1}$ }{
                Trigger a full compaction from Level $i$ to Level $i+1$;\\
                $C_{i+1}\gets C_{i+1}+1$, $C_{i}\gets 0$;\\
            }
        }
    }
\end{algorithm}
}

\noindent{\bf {\Leveling}.  }
In the {\leveling}, the capacity of each level is fixed, with a size ratio $T$ between two adjacent levels. 
A compaction happens when the data in the level reaches the predefined capacity, subsequently triggering a merge of the data in that level to the next level. A new level will be created to accommodate the compaction of the current largest level. 
As a result, the LSM-tree grows vertically (i.e., additional levels are gradually created) as the overall data volume expands.
Given the buffer size $B$, the capacity of Level $i$ is set as $B\cdot T^i$, a compaction from Level $i$ to Level $i+1$ is triggered when the data volume at Level $i$ surpasses its capacity.

\vspace{1mm}
\noindent{\bf{\Horizontal}. }
In the horizontal scheme, the number of levels is strictly fixed. Consequently, with the influx of more data, the LSM-tree broadens horizontally, meaning that each level's capacity would increase over time.
As shown in Algorithm~\ref{alg: horizontal}, let $\ell$ be the maximum number of levels, and $C_i$ be the times of compactions conducted from Level $i-1$ to Level $i$ since the previous compaction from Level $i$ to Level $i+1$. Especially, for the first level, $C_1$ is the number of flushes from memory buffer to Level 1 since the previous compaction from Level 1 to Level 2. For the largest level, $C_{\ell}$ is the number of flushes from Level $\ell-1$ to Level $\ell$. A full compaction from Level $i$ to Level $i+1$ is triggered whenever $C_i > C_{i+1}$. Intuitively, the horizontal scheme dynamically controls the data volume ratio between two adjacent levels by their relative numbers of compactions. 

\begin{figure*}[t]
    \centering
    \vspace{0mm}
    \includegraphics[width=0.95\textwidth]{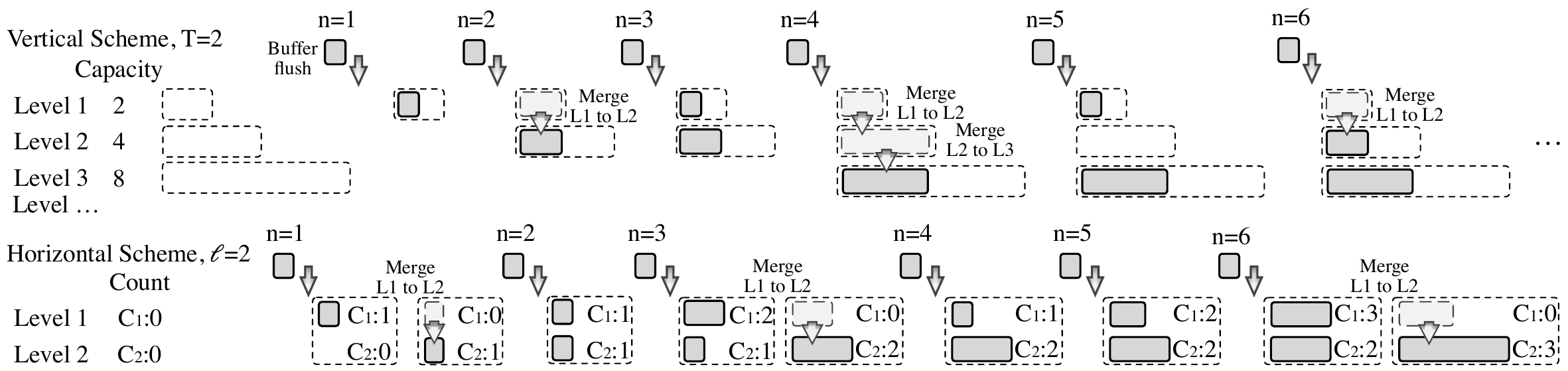}
    \vspace{-1mm}
    \caption{Running examples of the vertical and horizontal growth schemes.}
    \label{fig:scheme example}
    \vspace{0mm}
\end{figure*}

{
\vspace{1mm}
\noindent\textbf{Running Example (Figure~\ref{fig:scheme example}).} 
Let $n$ be the number of times a full memory buffer is flushed to disk. 
We set size ratio $T=2$ for the vertical scheme, and let the maximum number of levels $\ell$ be $2$ in the horizontal scheme.
Initially, each level of the vertical scheme has a predetermined capacity, with the first level having a capacity that can hold 2 buffers, and the second level can hold 4 buffers. In contrast, the horizontal scheme's levels do not have a predetermined capacity. Under the vertical scheme, when the number of buffer flushes $n=1$, the LSM-tree consists solely of Level 1, accommodating data equivalent to the buffer size. When $n=2$, the data volume of Level 1 reaches its capacity after two buffer flushes, and a compaction from Level 1 is triggered, with Level 2 created as the target level. The subsequent compaction from Level 1 to Level 2 occurs at $n=4$, at which point Level 2 also reaches its capacity after two compactions from Level 1. Consequently, a new Level 3 is then established, enforcing a compaction from Level 2 to Level 3. In this way, LSM-tree under the vertical scheme continuously expands vertically. 
On the other hand, the horizontal scheme assigns each level a compaction counter, starting at 0. Each buffer flush increases the counter $C_1$ at Level 1 by one. After the first buffer flush ($n=1$), we have $C_1=1$ and $C_2=0$. Given these values, the compaction condition for Level 1 under the horizontal scheme, $C_1 > C_2$, is met, thereby triggering a compaction from Level 1 to Level 2. After this compaction, $C_1$ is reset to 0, while $C_2$ increases by one. When $n=3$, $C_1$ equals 2, which surpasses $C_2=1$, thus triggering another compaction from Level 1 to Level 2. Subsequently, $C_1 $ is reset to 0 again, while $C_2$ rises to 2. A similar process recurs at $n=6$. 
}


\vspace{1mm}
\noindent\textbf{Horizontal Scheme Achieves a Better Read-Write trade-off (a.k.a., Bentley and Saxe’s theory~\cite{bentley1980decomposable}).} 
The compaction from Level \( i \) to Level \( i+1 \) involves merge-sorting all data from Level \( i \) with existing data in Level \( i+1 \), keeping each level as a single sorted run. This process becomes increasingly costly as the size of Level \( i+1 \) grows, especially if equal amounts of data are merged each time, as is typical in the vertical scheme.
For example, as shown in Figure~\ref{fig: intuition example} (a), the vertical scheme moves a total of 60MB data from Level 1 to Level 2 in three equal compactions of 20MB, which results in compaction costs of 20MB, 40MB, and 60MB, leading to a total of 120MB of writes. Intuitively, a more efficient approach is to perform compaction more frequently when Level 2 is smaller and less frequently as it grows. By merging at 10MB, 20MB, and then 30MB, the compaction costs become 10MB, 30MB, and 60MB, reducing the total write cost to 100MB, as illustrated in Figure~\ref{fig: intuition example} (b). This is precisely the strategy employed by the horizontal scheme. 
\setlength{\textfloatsep}{0pt}
\begin{figure}[t]
\vspace{0mm}
    \centering
    \setlength{\abovecaptionskip}{2mm}
    \setlength{\belowcaptionskip}{2mm}
    \hspace{-2mm}
    \includegraphics[width=0.95\textwidth]
    {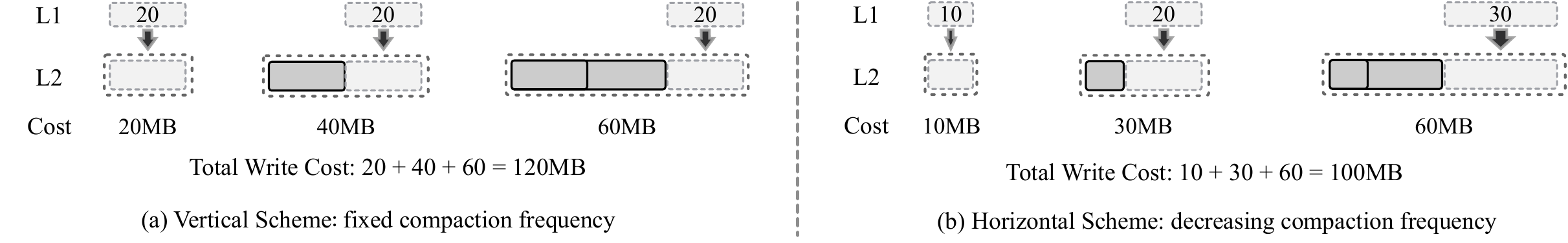}
    \vspace{-1.7mm}
    \caption{Illustrating the cost of different schemes.}
    \label{fig: intuition example}
    \vspace{-2mm}
\end{figure}

In fact, Bentley and Saxe~\cite{bentley1980decomposable} (later refined by Mathieu et al.~\cite{mathieu2014bigtable} for LSM-tree context) have demonstrated the optimality of the horizontal scheme, that the horizontal scheme incurs the minimal write cost for a given number of levels. Since the read cost of an LSM-tree under the leveling merge policy is proportional to the number of levels, the horizontal scheme consistently achieves the optimal trade-off between the read and write costs.

{
\vspace{1mm}
\noindent{\bf Limitations of the Horizontal Scheme.}\label{sec:summary}
The previous analyses conclude that the traditional vertical scheme is suboptimal, while the horizontal scheme can provide better overall read-write trade-off. 
Nevertheless, as we mentioned, most recent key-value stores~\cite{rocksdb, leveldb, lakshman2010cassandra, cockroachdb, scylladb, badgerdb, wiredtiger} tend to adopt the vertical scheme. This somewhat contradicts the theoretical results presented above. 
We conjecture that this discrepancy is probably due to two practical limitations of the horizontal scheme.


\vspace{0.5mm} 
\noindent{\underline{\textit{Limited capability in handling update-heavy workloads.}}}
Update he-avy workloads are common in practice. The vertical scheme effectively accommodates such workloads by switching to tiering merge policy. In contrast, the horizontal scheme has only been designed for the leveling merge policy. Moreover, as we will show shortly, directly applying the tiering merge policy may easily lead to sub-optimal performance. To address this limitation, we propose the \emph{horizontal-tiering} scheme, which is the first horizontal scheme algorithm that can be effectively applied under the tiering merge policy. This approach will be explained in detail in Section~\ref{sec:horizontal-tiering}.

\vspace{0.5mm}
\noindent{\underline{\textit{Gap between theory and practice.}}}
Horizontal scheme is based on {\it full compaction} granularity, whereas file-level partial compaction granularity is the standard in most industrial systems, and has demonstrated practical advantages in certain aspects of empirical system performance, including significantly lower space amplification and improved worst-case throughput. This discrepancy arises because a single full compaction merges the entire contents of one level into the next level all at once. When the sizes of the participating levels are large, this process incurs substantial space amplification (since, for consistency, the data involved in compaction must be backed up until the compaction completes) and causes write stalls (as data in preceding levels cannot be transferred while compaction is underway). In contrast, each partial compaction merges only a single file from one level with a small set of overlapping files in the next level, thereby affecting only a fraction of the data and considerably mitigating these issues. 
This gap between theory and practice motivates us to propose a novel design that integrates both the horizontal and vertical scheme, as described in Section~\ref{sec:our-method}.
}

\section{The Horizontal-Tiering Scheme}
\label{sec:horizontal-tiering}

\begin{figure}[t]
    \centering
    \setlength{\abovecaptionskip}{2mm}
    \setlength{\belowcaptionskip}{2mm}
    \hspace{0mm}
    \includegraphics[width=1\textwidth]
    {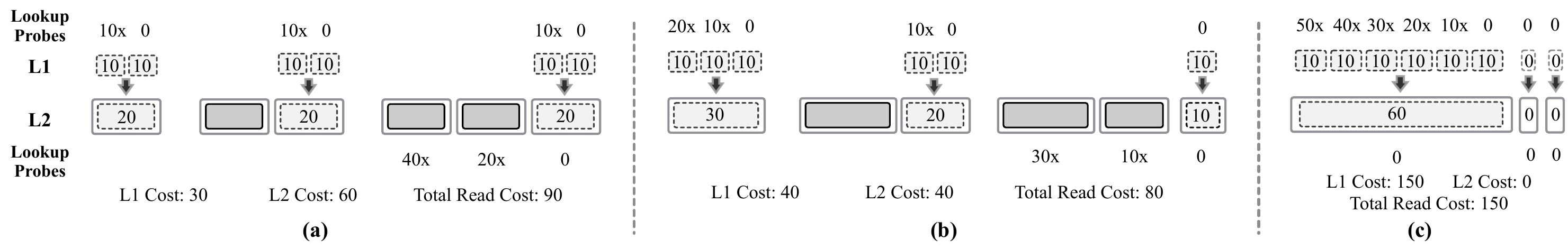}
    \vspace{-3mm}
    \caption{Different compactions lead to varying read costs.}
    \label{fig: intuition example tiering}
    \vspace{0mm}
\end{figure}

In this section, we introduce the horizontal-tiering scheme, which serves as the backbone of our proposed method, {\ourmethod}. The horizontal-tiering scheme redefines the horizontal algorithm by integrating the tiering merge policy, which governs when the data in a level is compacted into the subsequent level as a new isolated sorted run. While there are numerous compaction-timing choices in an LSM-tree, our proposed horizontal-tiering scheme employs a compaction sequence that achieves the optimal read-write performance trade-off, or precisely, achieves the minimum write cost when fixing the number of levels. 
This is the first significant extension of the horizontal growth scheme to the tiering merge policy, resulting in a much broader range of optimal read-write performance trade-offs for horizontal schemes.
Although the proof of these optimality properties is quite intricate, the algorithm of the final scheme is easy to implement, making it appealing from an engineering standpoint as well. 


\subsection{High-Level Idea}
\label{section: ht high level}
Under the tiering merge policy, we first claim that the write amplification, in the long run, is the same {\it regardless of when} we compact data from a level to its subsequent level. 
This is because, with the tiering policy, each entry incurs only a single write amplification (i.e., rewritten onto the disk once) when passing through each level, exactly at the time when it is moved onto the level. As a result, the key to optimizing overall performance lies in selecting the optimal timing for compacting data from Level $i$ into a new run at Level $i+1$, in order to minimize read amplification. 

A simple approach is to follow the vertical scheme by using the original Algorithm~\ref{alg: horizontal}, but instead of merge-sorting data with the existing content in Level $i+1$, the compaction would simply move data from Level $i$ into a new run in Level $i+1$. However, this naive strategy would result in significantly higher read amplification, because it would quickly create many runs in a level, even when the total data size in that level is still relatively small. Since the read cost is roughly proportional to the number of runs, this approach would lead to suboptimal read performance. This underscores the critical role of compaction timing in optimizing read efficiency. 

To further explore the complexity of this challenge, we consider three examples in Figure~\ref{fig: intuition example tiering}, where a total of 60MB of data is moved from Level 1 to Level 2 across 3 compactions as data is written into the workload. {We further assume there are $x$ lookups when 1MB of data is ingested, and each lookup incurs an I/O for every run present at that time. \footnote{For simplicity, this assumption excludes common memory optimizations typically employed in LSM-trees, such as Bloom filters and block caches. However, our conclusions remain valid even when considering these optimizations. For example, in an LSM-tree utilizing Bloom filters and block caches, if the workload is uniform, there is still a consistent probability (approximately equal to the Bloom filter’s false positive rate) that a data block retrieval will occur on each run during every lookup. In this context, the insight presented in this example continues to hold.}} 
Figure~\ref{fig: intuition example tiering} (a) describes an equal-frequency compaction, where a compaction from Level 1 to Level 2 happens every insertion of 20MB data. Figure~\ref{fig: intuition example tiering} (b) illustrates an increasing-frequency compaction strategy, where compactions occur after ingesting 30MB, 20MB, and 10MB of data, respectively. In contrast, Figure~\ref{fig: intuition example tiering} (c) presents an extreme case where three compactions happen consecutively after inserting 60MB of data. Now, consider the total read cost during the insertion of 60MB. We will analyze the costs at Level 1 and Level 2 separately.

At Level 1, each run is 10MB in size, and if compactions are less frequent, earlier runs remain in the level for longer, increasing the read cost. Thus, Case (c) results in significantly more I/Os at Level 1 (150 times of I/O operations), compared to Case (a) and Case (b), which incur 30 and 40 times of I/O operations, respectively. However, compacting more frequently increases the cost at Level 2, as runs are generated earlier and serve more lookups. For example, Case (a) has a higher cost at Level 2 because runs stay longer in this level compared to the other two cases. Among the three, Case (b) achieves the lowest overall I/O cost. Its first compaction delayed until 30MB of data has been written, and subsequent compactions occur more frequently, triggered after 20MB and 10MB of written data, respectively. In this manner, by the third compaction, the two earlier sorted runs in Level 2 have existed for shorter duration compared to the case in Figure~\ref{fig: intuition example tiering} (a). The first run has only undergone 30 times of lookups, while the second run has experienced just 10 times of lookups, as compared to 40 and 20 times of lookups in Case (a). This suggests that a decreasing-frequency compaction strategy may help reduce read amplification.

Building on this insight, an intuitive approach is to perform compactions to Level $i$ less frequently when it is relatively empty, because the sorted runs written at this stage will exist for a longer period and thus contribute more to read amplification. In contrast, when Level $i$ is close to full, we can add new sorted runs at shorter intervals, since they will not persist long and will therefore have a smaller impact on read amplification. We formalize the insight into developing the horizontal-tiering scheme below.


\begin{figure*}[t]
    \centering
    \vspace{0mm}
    \includegraphics[width=0.98\textwidth]{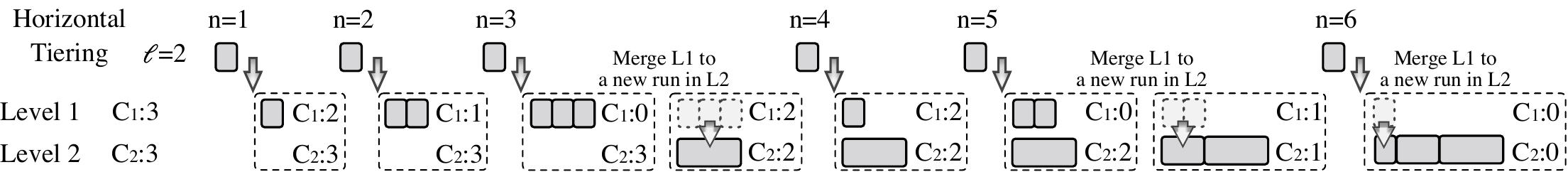}
    \vspace{-1mm}
    \caption{Running example of the horizontal-tiering scheme.}
    \label{fig:horizontal tiering example}
    \vspace{0.5mm}
\end{figure*}

\vspace{-3mm}
\subsection{Algorithms} 
Building upon our previous analysis, we introduce a new horizontal-tiering scheme that extends the original horizontal scheme to better accommodate update-heavy workloads, providing a superior read-write trade-off compared to vertical-tiering. As shown in Algorithm 3, the overall algorithm is similar to the default horizontal scheme under leveling, with two key differences. First, the compaction counters of all levels are initialized to an integer \( k \) instead of zero (Lines 4–5). Second, when a buffer flush or compaction occurs, the compaction counter for the target level decrements instead of incrementing. Furthermore, a level's compaction is triggered when its counter reaches zero, and after compaction, its counter is reset to the current counter value of its subsequent level (Lines 6–12).

These two differences cause compactions to become increasingly frequent as the counters decrease. 
{To illustrate this point, we present a simple example in Figure~\ref{fig:horizontal tiering example}, where \(\ell = 2\) and \(k = 3\), meaning that all levels' compaction counters are initialized to 3. With each buffer flush, the compaction counter \(C_1\) at Level 1 decreases by 1. When \(n = 3\) (after three buffer flushes), $C_1$ is reduced to zero and meets the compaction condition \(C_i = 0\), which triggers a compaction from Level 1 to Level 2. Following this compaction, \(C_2\) decreases by 1, and \(C_1\) is reset to the current value of \(C_2\), now 2.
After two additional buffer flushes, \(C_1\) decreases to 0 once again, triggering another compaction. Under the tiering merge policy, the data from Level 1 is merged into a new sorted run in Level 2. Subsequently, \(C_2\) decreases from 2 to 1, and \(C_1\) is reset to 1. The next compaction then occurs after only one buffer flush.}
As \(C_2\) decreases due to compactions, the compactions from Level 1 to Level 2 become increasingly frequent. This behavior aligns with the best-performing increasing-frequency compaction strategy depicted in Figure~\ref{fig: intuition example tiering}(b).

The algorithm initializes the compaction counters for all levels to \( k \) (Lines 1–5), where \( k \) is the smallest integer satisfying the inequality
$
\frac{N}{B} \leq {{k + \ell - 1} \choose {\ell}}.
$
The reason for such setting is as follows. In running the algorithm, the compaction counters for each level will gradually decrease, until the compaction counters of all levels reach zero. Therefore, the initial value of the compaction counter must be set based on the total data size \( N \)\footnote{Specifically, $N$ represents the data size that the LSM-tree will have written by the end of Algorithm 2. In practical applications, $N$ may be set according to the capacity of the storage device or a more precise value based on user estimates. In general, as long as $N$ falls within a reasonable range, deviations in its value do not significantly affect the practical performance of Algorithm 2.}, where a larger data size corresponds to a larger initial compaction counter value \( k \). We present a novel lemma to formalize the relationship between \( k \) and the data size, which explains the initial setting of $k$.
\begin{lemma}
Under the horizontal-tiering scheme, if we initially set the compaction counters of all levels to \( k \), then after \( {{k + \ell - 1} \choose {\ell}} \) buffer flushes, the compaction counters of all levels will decrease to zero. 
\label{lemma: binomial volume}
\end{lemma}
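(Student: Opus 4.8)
The plan is to prove a slightly stronger statement by induction on the number of levels $\ell$ and then read off Lemma~\ref{lemma: binomial volume} as a special case. For any $\ell \ge 1$ and any $v \ge 0$, let $T(\ell,v)$ be the number of buffer flushes the horizontal-tiering scheme performs, starting from the state $C_1 = C_2 = \cdots = C_\ell = v$, before all counters are simultaneously zero for the first time. I will show $T(\ell,v) = \binom{v+\ell-1}{\ell}$. The base case $\ell = 1$ is immediate, since each buffer flush simply decrements $C_1$, so $T(1,v) = v = \binom{v}{1}$.

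For the inductive step, the crux is understanding how Level $\ell$ couples to the levels beneath it. A compaction from Level $\ell-1$ to Level $\ell$ fires exactly when $C_{\ell-1}$ reaches $0$; tracing the recursive trigger-and-reset logic of the horizontal-tiering algorithm, such an event is always the tip of a full cascade: a buffer flush drives $C_1$ to $0$, which fires Level $1\!\to\!2$, which drives $C_2$ to $0$, and so on up to $C_{\ell-1}\to 0$. When this cascade finally touches Level $\ell$ it decrements $C_\ell$ once (say to value $c$), and then, as the nested compaction calls unwind, each of $C_{\ell-1}, C_{\ell-2}, \dots, C_1$ is reset to the current value of the level immediately above it, so all of them become $c$ as well. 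Consequently, right after the $j$-th compaction from Level $\ell-1$ to Level $\ell$, every counter equals $v-j$. Moreover, between two consecutive such compactions Level $\ell$ is untouched and exerts no influence on the lower levels, so Levels $1,\dots,\ell-1$ evolve exactly as a standalone $(\ell-1)$-level scheme; by the induction hypothesis, started from the all-$(v-j+1)$ state this sub-run consumes $\binom{v-j+\ell-1}{\ell-1}$ flushes before $C_{\ell-1}$ next reaches $0$.

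Assembling these pieces, the entire run decomposes into $v$ consecutive phases (one per decrement of $C_\ell$), with phase $j$ using $\binom{v-j+\ell-1}{\ell-1}$ buffer flushes, whence $T(\ell,v) = \sum_{j=1}^{v}\binom{v-j+\ell-1}{\ell-1} = \sum_{m=\ell-1}^{v+\ell-2}\binom{m}{\ell-1} = \binom{v+\ell-1}{\ell}$ by the hockey-stick identity. Since $C_\ell$ stays strictly positive throughout every phase up to its final decrement, the all-zero state is reached for the first time exactly at the end of phase $v$, i.e.\ after $\binom{v+\ell-1}{\ell}$ flushes; taking $v=k$ yields Lemma~\ref{lemma: binomial volume}.

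I expect the main obstacle to be making the structural claim of the second paragraph fully rigorous — that every Level $\ell-1 \to \ell$ compaction is preceded by a complete cascade originating at a buffer flush, and that the unwinding of that cascade resets all of $C_1,\dots,C_{\ell-1}$ to a single common value, so each phase is genuinely a fresh instance of the $(\ell-1)$-level problem. I would support this by carrying along the invariant that, outside of cascades, the counter vector is non-decreasing and $C_1$ is the only entry a buffer flush can touch, together with a short sub-induction on cascade depth showing that a cascade reaching Level $i$ leaves $C_1 = \cdots = C_i$ all equal to the value $C_i$ holds just after its decrement. The remaining ingredients — the base case, the phase decomposition, and the binomial summation — are routine.
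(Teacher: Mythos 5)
Your proposal is correct and follows essentially the same route as the paper's proof: induction on the number of levels, decomposing the run into phases delimited by successive compactions into the top level (each phase being a fresh instance of the $(\ell-1)$-level problem with the common counter value decremented by one, justified by the reset behavior of the algorithm), and summing $\sum_{j}\binom{v-j+\ell-1}{\ell-1}=\binom{v+\ell-1}{\ell}$ via the hockey-stick identity. Your explicit cascade/reset invariant is just a more detailed statement of what the paper justifies by citing the reset lines of Algorithm~\ref{alg: horizontal tiering}, so no substantive difference remains.
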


{
\begin{algorithm}[t]
  \caption{Horizontal-tiering Scheme}
  \label{alg: horizontal tiering}
    \KwIn{
      Maximum number of levels $\ell$, data size $N$
    }
    $k\gets 0$;\\
    \While{${{k+\ell-1}\choose{\ell}}<\frac{N}{B}$}{
        $k\gets k+1$
    }
    \For{$i=1\ to\ \ell$}{
        $C_i\gets k$
    }
    \For {each flush from buffer}{
        $C_1\gets C_1-1$;\\
        \For{$i=1\ to\ \ell-1$}{
            \If{$C_i=0$ }{
                Trigger a full compaction from Level $i$ to Level $i+1$;\\
                $C_{i+1}\gets C_{i+1}-1$;\\
                \For{$j=1\ to\ i$}{
                    $C_{j}\gets C_{i+1}$;\\
                }
            }
        }
    }
\end{algorithm}
}

\subsection{Optimality Analysis}
The traditional horizontal scheme is known to achieve the optimal read-write trade-off under the leveling merge policy. Our proposed horizontal-tiering algorithm generalizes this optimality to the tiering merge policy. 
%
%
As we mentioned, the optimality analysis can be reduced to analyzing the total read cost, when the number of levels is fixed. Next, we show that the compaction sequence described in Algorithm~\ref{alg: horizontal tiering} achieves the optimal read cost. 

We first state the problem setting as follows. Between consecutive buffer flushes, compactions can be performed between adjacent levels to control the number of runs of Level 1 and other levels, reducing future write costs. A compaction that starts from Level \( l_1 \) and ends at Level \( l_2 \) would merge all runs from Levels \( l_1 \), \( l_1+1 \), ..., \( l_2-1 \) into Level \( l_2 \).
Different {\scheme}s result in varying compaction timings, which ultimately lead to different read overheads. For example, if compactions are too infrequent or not performed at all, sorted runs accumulate in Level 1, making lookups increasingly costly. 
The timing of each compaction can be represented by a triple \( p = (I, l_1, l_2) \), where \( I \) denotes that the compaction occurs after the \( I \)-th buffer flush, and \( l_1 \) and \( l_2 \) indicate the starting and ending levels, respectively. This scenario can then be formulated into the following problem.

\noindent\begin{problem}
Consider a workload that would incur a total of \( n \) sequential buffer flushes in the LSM-tree, which has a maximum of \( \ell \) levels. Our objective is to find the optimal compaction sequence $S^* = \{p^*_1,p^*_2,...\}$ to minimize the total read cost of processing this workload. We define this problem as $\psi(n, \ell)$.
\label{problem: uniform tiering}
\end{problem}

In a tiered LSM-tree, different {\scheme}s correspond to specific compaction sequences in Problem 1, resulting in varying costs. We give the following theorem stating that the compaction sequence employed by our horizontal-tiering algorithm incurs the minimal cost. 
Here we provide only a proof sketch, and the detailed proof is available in our technical report~\cite{techreport}.

\begin{theorem}
    When there are total data of size \( N = \binom{k + \ell - 1}{\ell} \cdot B \) written into the LSM-tree, the compaction operations in the horizontal-tiering scheme described in Algorithm~\ref{alg: horizontal tiering} is optimal for Problem 1.
\label{lemma: horizontal tiering optimality}
\end{theorem}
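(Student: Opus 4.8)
The plan is to reduce the theorem to a purely combinatorial optimization and solve it by induction on the number of levels $\ell$. As already observed, under the tiering policy the write cost is independent of the compaction schedule, so Problem~\ref{problem: uniform tiering} only asks to minimize the total read cost; under the stated model this cost equals $x$ times $\sum_{r}\mathrm{life}(r)$, where $\mathrm{life}(r)$ is the number of buffer flushes during which run $r$ is present, equivalently $x\sum_{t=1}^{n}R(t)$ with $R(t)$ the number of runs just after the $t$-th flush. I will call $\sum_{r}\mathrm{life}(r)$ the \emph{cost} of a schedule and write $\psi(n,\ell)$ for its minimum.

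First I would establish a recursive decomposition on the last level. Fix an optimal schedule for $\psi(n,\ell)$ and let $I_1<\dots<I_m$ be the flush indices at which a compaction that deposits a new run into Level $\ell$ occurs; appending a zero-lifetime (hence cost-free) compaction into Level $\ell$ at flush $n$, we may assume $I_m=n$. Setting $I_0:=0$ and $n_j:=I_j-I_{j-1}$, between two consecutive such events the runs residing in Levels $1,\dots,\ell-1$ behave exactly like an independent instance of the $(\ell-1)$-level problem on a block of $n_j$ flushes (the instance starts empty because the previous compaction into Level $\ell$ emptied those levels), so their contribution inside block $j$ is at least $\psi(n_j,\ell-1)$; the $j$-th run deposited in Level $\ell$ survives for $n-I_j=\sum_{j'>j}n_{j'}$ flushes, contributing $\sum_{j}(j-1)n_j$ in total. (A compaction that skips levels only merges more runs and leaves a non-empty prefix at the start of the next block, which one shows is never advantageous; equivalently, write-cost-minimal schedules use only unit compactions, as in Algorithm~\ref{alg: horizontal tiering}.) This yields
\[
\psi(n,\ell)\;=\;\min_{\substack{m\ge 1,\; n_1+\dots+n_m=n\\ n_j\ge 1}}\Big[\sum_{j=1}^{m}\psi(n_j,\ell-1)+\sum_{j=1}^{m}(j-1)\,n_j\Big],
\qquad \psi(n,1)=\binom{n}{2},
\]
the base case holding since with a single level no compaction is possible and run $i$ lives for $n-i$ flushes.

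Next I would check that Algorithm~\ref{alg: horizontal tiering} realizes this recursion at the binomial split. By the hockey-stick identity $\binom{k+\ell-1}{\ell}=\sum_{j=1}^{k}\binom{k-j+\ell-1}{\ell-1}$, and by Lemma~\ref{lemma: binomial volume} together with the structure of the algorithm, the horizontal-tiering scheme spends its $j$-th ``super-phase'' running the $(\ell-1)$-level scheme with counter $k-j+1$, i.e.\ on a block of $n_j=\binom{k-j+\ell-1}{\ell-1}$ flushes, before depositing run $j$ into Level $\ell$. Hence, writing $g(\ell,k)$ for the cost of the scheme when $N=\binom{k+\ell-1}{\ell}\cdot B$, we obtain $g(\ell,k)=\sum_{j=1}^{k}g(\ell-1,k-j+1)+\sum_{j=1}^{k}(j-1)\binom{k-j+\ell-1}{\ell-1}$, which is exactly the recursion's objective evaluated at the binomial split; so it remains to prove this split is the minimizer.

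The crux is this last combinatorial step, which I would handle by an exchange argument supported by an invariant carried through the induction, namely that $\psi(\cdot,\ell-1)$ is discretely convex (and, if needed, that the successive differences of $g(\ell-1,\cdot)$ interlace the block index in the right way). Moving one unit of flushes from block $j+1$ to block $j$ changes the objective by $\bigl[\psi(n_j{+}1,\ell{-}1)-\psi(n_j,\ell{-}1)\bigr]-\bigl[\psi(n_{j+1},\ell{-}1)-\psi(n_{j+1}{-}1,\ell{-}1)\bigr]-1$; convexity then forces an optimum to satisfy $n_1\ge n_2\ge\dots\ge n_m$ with tightly controlled gaps, and combining this with the feasibility and essential uniqueness supplied by the hockey-stick decomposition pins the optimum to the binomial split (and pins $m=k$). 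Substituting the inductive hypothesis $\psi(n_j,\ell-1)=g(\ell-1,k-j+1)$ into the recursion gives $\psi(n,\ell)=g(\ell,k)$, i.e.\ the claimed optimality. I expect the main difficulty to be precisely in formulating the convexity/interlacing invariant so that the exchange argument closes the induction uniformly in $\ell$, and in controlling the integrality of the $n_j$ and the free choice of $m$; the decomposition and the base case are routine by comparison.
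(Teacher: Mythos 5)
Your setup is sound and matches the paper's skeleton: the reduction to minimizing total run lifetimes, the decomposition of $\psi(n,\ell)$ by the compactions into Level $\ell$ (your multiway-partition recursion is just the unrolled form of the paper's recursion $\tau(n,\ell)=\min_i\{\tau(i,\ell-1)+(n-i)\,r+\tau(n-i,\ell)\}$), and the hockey-stick identification of Algorithm~\ref{alg: horizontal tiering}'s schedule with the binomial split $n_j=\binom{k-j+\ell-1}{\ell-1}$ (the content of Lemma~\ref{lemma: binomial volume}) are all correct. Two things, however, are asserted rather than proved, and the second is the heart of the theorem. First, your parenthetical claim that compactions not starting at Level~1 are never advantageous is exactly what the paper isolates as a separate lemma (every optimal compaction starts from Level~1); it is short, but your block decomposition silently relies on it to guarantee each block starts empty, so it must be argued. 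Second, and decisively, the ``crux'' step is left as a plan: you neither prove discrete convexity of $\psi(\cdot,\ell-1)$ nor show that convexity suffices. In fact $\psi(\cdot,\ell-1)$ is only piecewise linear, with integer slopes constant on the long intervals $\bigl[\binom{m}{\ell-1},\binom{m+1}{\ell-1}\bigr]$, so the exchange inequalities you write down only force $\Delta\psi(n_j,\ell-1)+j$ to be (weakly) equalized across blocks; they do not by themselves pin the $n_j$ to the binomial values, determine $m=k$, or rule out ties at breakpoints. Making the exchange argument close requires knowing precisely where the slope of $\psi(\cdot,\ell-1)$ increments, i.e.\ the breakpoint structure of the optimal cost --- which is essentially the quantitative information the paper establishes directly.

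The paper closes the argument differently: it guesses the closed form $\tau(n,\ell)=\bigl[\ell\binom{m}{\ell+1}+(m-\ell+1)\bigl(n-\binom{m}{\ell}\bigr)\bigr]\cdot r$ (with $\binom{m}{\ell}\le n\le\binom{m+1}{\ell}$) and proves it by induction, showing in one computation that the binomial split $i^*$ attains this value (upper bound) and that any other split $j$ costs at least as much (lower bound), which simultaneously yields your convexity and the exact location of the optimal split. So your route is not wrong in spirit, but as written it defers the only genuinely hard step; to complete it you would either have to carry the full closed form (or at least its slope/breakpoint description) as the inductive invariant --- at which point you have reproduced the paper's Lemmas on the optimal split and the cost formula --- or find an independent proof of the interlacing property you allude to, which you have not supplied.
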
 

\begin{proof}[Proof Sketch]
We can adopt a dynamic programming approach to solve Problem 1 by decomposing it into smaller subproblems. Consider the problem \( \psi(n, \ell) \), where \( n \) is the number of buffer flushes and \( \ell \) is the number of levels. Suppose that in the optimal compaction sequence \( S^* = \{p^*_1, p^*_2, \dots\} \), the first compaction to the largest level \( p^*_f \) occurs after the \( i \)-th buffer flush. We can then partition the original problem into two subproblems -- 
Subproblem \( \psi(i, \ell - 1) \) that addresses the first \( i \) buffer flushes  with the first \( \ell - 1 \) levels, and
subproblem \( \psi(n - i, \ell) \) that involves the remaining \(n - i \) buffer flushes and all \( \ell \) levels.
If the optimal compaction sequences for these subproblems are \( S_1 \) and \( S_2 \) respectively, then the optimal solution to the original problem satisfies \( S^* = \{S_1, p^*_f, S_2\} \).

We first explain why the portion of \( S^* \) preceding \( p^*_l \) must be $S_1$. Before the first compaction to the largest level in \( S^* \), the data entering the LSM-tree with the first $i$ buffer flushes remains within the first \( \ell - 1 \) levels since no compaction to Level \( \ell \) occurs before $p^*_f$. This scenario exactly matches problem \( \psi(n - i, \ell) \)
because \( S^* \) is the optimal compaction sequence that minimizes the total read cost, and the portion of \( S^* \) preceding \( p^*_f \) must correspond to the optimal solution of \( \psi(i, \ell - 1) \). Otherwise, we could replace that portion with the optimal sequence \( S_1 \), resulting in a lower total read cost, contradiction ensues. 
Similarly, the portion of \( S^* \) following \( p^*_f \) must correspond to the optimal solution of \( \psi(n - i, \ell) \), since the portion of data already moved into the largest level would not be merged with and affect any further data entering the LSM-tree. If it were not optimal, substituting it with \( S_2 \) would yield a compaction sequence with a lower read cost. 
By recursively applying this decomposition, we can solve Problem 1 by breaking it down into subproblems until they become trivial (e.g., when \( n = 1 \) or \( \ell = 1 \)). 
This dynamic programming approach constructs the optimal compaction sequence by combining the solutions of all these subproblems.

Furthermore, we can prove the following claim by mathematical induction. For any \( n \) and \( \ell \), if in the optimal compaction sequence corresponding to \( \psi(n, \ell) \), the last compaction to the largest level occurs after the \( i \)-th buffer flush, then \( i \) must satisfy
\begin{equation}
    {{m-1}\choose{\ell}}\leq n-i\leq{{m}\choose{\ell}}\ ,\ \text{and}\ \ 
    {{m-1}\choose{\ell-1}}\leq i\leq{{m}\choose{\ell-1}}
\end{equation}

where $m$ is the integer that satisfies ${{m}\choose{\ell}} \leq n \leq {{m+1}\choose{\ell}}$. 
Having identified the corresponding timing of compaction $p^*_f$, we can divide any problem into two subproblems. For each of these subproblems, we apply this equation to determine their first compaction to the largest level $p'_f$. This recursive process continues until all resulting subproblems can no longer be subdivided, that is, when either \( n = 1 \) or \( \ell = 1 \). The collection of all compactions determined through this process constitutes an optimal compaction sequence, which matches the compaction operations in Algorithm~\ref{alg: horizontal tiering}.
\end{proof}

\begin{figure*}[t]
\vspace{0mm}
    \centering
    \includegraphics[width=0.9\textwidth]{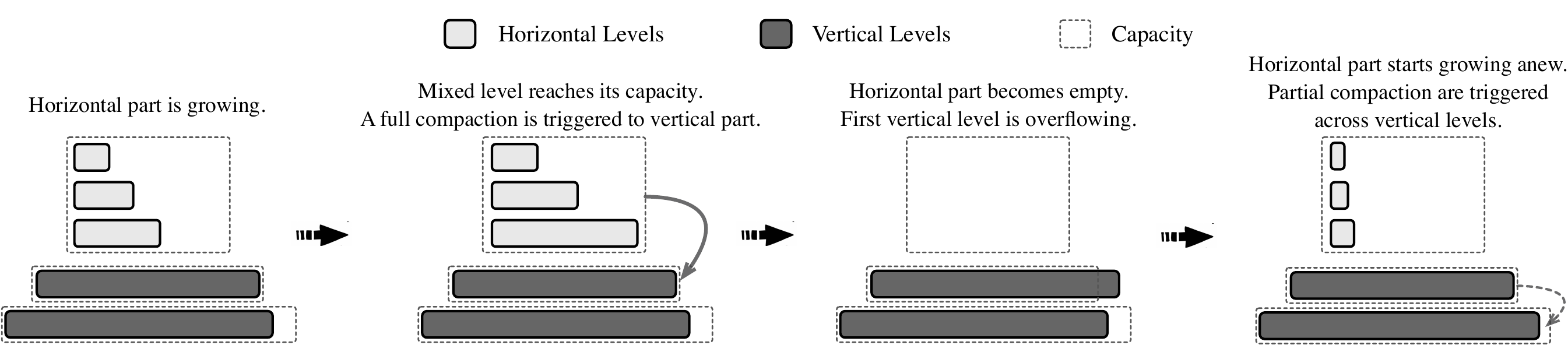}
    \vspace{-1mm}
    \caption{General process in {\ourmethod}.}
    \label{fig: hybrid example}
    \vspace{1mm}
\end{figure*}

\section{The {\ourmethod} Growth Scheme}\label{sec:our-method}

In the previous section, we introduced the enhanced horizontal growing scheme, which maintains high efficiency under update-heavy workloads. However, it still suffers from high space amplification and write stalls due to its reliance on full compaction. This limitation prompts us to propose {\ourmethod}, a novel LSM growing scheme that combines both horizontal and vertical approaches to provide robust and comprehensive performance.

\subsection{Overall Layout}
\label{sec: hybrid scheme}
{\ourmethod} divides the LSM-tree into two sections from top to bottom -- horizontal part and vertical part, such that the upper horizontal part utilizes the horizontal scheme for improving the overall performance, while the lower vertical part manages the largest two levels, aiming to mitigate space amplification and alleviate severe write stalls through partial compaction.

\vspace{0.5mm}
\noindent\underline{\textit{Horizontal Part.}} This part comprises the upper levels of the LSM-tree, where the vast majority of compactions occur. These levels follow the horizontal {\scheme} to achieve better efficiency. The horizontal part has a predetermined capacity, which is \( n \) times the buffer size (we will discuss how to set $n$ shortly). When the data size reaches this capacity, a full compaction is performed to merge all data into the vertical part.
The structure of the horizontal part is flexible such that it can be adjusted according to the characteristics of the workload. Specifically, it has two configurable knobs.
(1) Merge Policy: this allows switching between a leveling merge policy and a tiering merge policy. (2) Number of Levels \( \ell \): under the leveling policy, a smaller number of levels leads to better read performance; under the tiering policy, fewer levels result in better write performance.
Section~\ref{sec:selftune} describes how {\ourmethod} self-tunes these parameters. 

\vspace{0.5mm}
\noindent\underline{\textit{Vertical Part.}} The vertical part comprises the two largest levels of the LSM-tree, each possessing fixed capacities similar to those in the vertical scheme. When the horizontal part reaches its total capacity, all data is merged into the first level of the vertical part through full compaction. Once the first level becomes full, data is moved to the second level—the largest level—via partial compaction. Like the horizontal part, the vertical part can switch between leveling and tiering merge policies and includes an adjustable parameter, i.e., the size ratio \( T \) between adjacent levels in the vertical scheme.
Specifically, by default, the capacity of the first level in the vertical part is \( T \) times the total size of the horizontal part, which is \( n \cdot T \cdot B \). The second level's capacity is \( T \) times that of the first level, resulting in \( n \cdot T^2 \cdot B \). Because partial compaction between the two levels in the vertical part causes only minimal space amplification and write stalls, {\ourmethod} can roughly reduce space amplification and write stalls by a factor of \( T \) compared to an LSM-tree that employs full compaction throughout. 

One may wonder why we empirically fix two levels in the vertical part.
The rationale behind this is that the primary purpose of the vertical part is to avoid excessive space amplification and write stalls by enabling partial compaction between the largest levels By configuring the largest two levels to use the vertical scheme and applying partial compaction between them, {\ourmethod} can significantly reduce space amplification and write stalls by roughly a factor of \( T \) compared to an LSM-tree that employs full compaction throughout. We found that practically such setting is quite sufficient and effective. {Further increasing the number of levels in the vertical part would require proportionally reducing the scale of the horizontal part, thereby diminishing its benefits.}
Note that, even though the number of levels in the vertical part is fixed, we can still trade between write amplification and space amplification by tuning the parameter \( T \). Increasing \( T \) can further decrease overall space amplification and write stalls but at the cost of increased write amplification in the vertical part.

\vspace{1mm}
\noindent
{\bf{Running Example.}}
Figure~\ref{fig: hybrid example} displays a typical process of {\ourmethod} when ingesting new data. Initially, the horizontal part is ready to grow, while the first level of the vertical part holds data volume near its full capacity. Once the horizontal part is full, it starts a compaction to the first vertical level. This process empties all the horizontal levels. Subsequently, as the data volume in the first vertical level exceeds its capacity, data begins to gradually move into the largest level via partial compactions, after which the horizontal part begins a new phase of growth.

\vspace{1mm}
\noindent\textbf{Dynamic Resizing of Horizontal Part.}
To maintain the structure of the LSM-tree as the data volume scales up, {\ourmethod} dynamically increases the parameter \( n \), which controls the capacity of the horizontal part. Specifically, $n$ is initiated with a moderate value according to the data scale. When the largest level of the vertical part reaches its capacity, we increment \( n \) by a factor of \( 1/T \) after the next full compaction clears the horizontal part, and adjust the capacity of the vertical part accordingly. This adjustment incurs no additional overhead because it is after a full clear of the horizontal part. In this way, the overall capacity of the LSM-tree under {\ourmethod} can gradually expand.

\vspace{1mm} 
\noindent
\textbf{Optimize Write Amplification by Adjusting Size Ratio.}
We can further optimize {\ourmethod}'s write amplification by adjusting the size ratio of the two largest levels in the vertical part. These two levels are special because they transition from a level employing full compaction to that with partial compaction, opening up an opportunity to optimize the overall write amplification. 

Since the vertical part employs partial compaction, the data volume in its first level consistently remains close to its capacity. Consequently, the compaction from the horizontal part to the first level of the vertical part will always incur a write amplification factor of \( T \), because the data volume in this level is maintained at \( T \) times the capacity of the horizontal part.
In contrast, the average write amplification incurred when compacting from the first level to the second level of the vertical part is only \( \frac{T + 1}{2} \). This reduction occurs because partial compaction leads to uneven data density within the level, as established in previous studies~\cite{lim2016towards}. Due to these differences, setting equal size ratios of \( T \) results in different write amplifications for compactions at the two levels of the vertical part, leading to suboptimal overall write amplification.
Specifically, we set the size ratio between the horizontal part and the first level of the vertical part to \( T' \), and the size ratio between the first and second levels of the vertical part to \( {T^2}/{T'} \). This setting ensures that the capacity of the largest level in {\ourmethod} remains unchanged, thereby maintaining the overall scale of the LSM-tree. Under this setup, the write amplification at the first level of the vertical part becomes \( T' \), while the write amplification at the second level is adjusted to \( \left( {T^2}/{T'} + 1 \right) / 2 \). Note that

\begin{equation}\small
    T'+(\frac{T^2}{T'}+1)/2 \ge 2\sqrt{T'\cdot \frac{T^2}{2T'}}+\frac{1}{2}=\sqrt{2}T+\frac{1}{2}
\end{equation}
The above inequality is due to AM-GM Inequality, and the equality holds when $T'={T}/{\sqrt{2}}$.

    
        

\vspace{-2mm}
\subsection{Self-Tuning for Various Workloads}\label{sec:selftune}

\noindent
\textbf{Cost Model for Horizontal Part.}
{\ourmethod} self-tunes its horizontal part to ensure it reaches the optimal point on the read-write trade-off curve for any given workload. To this end, we have carefully analyzed the cost of each operator and proposed a cost model for the horizontal part.

\vspace{0.5mm}
\noindent\underline{\textit{Point and Range Lookup Cost.}}
A point lookup probes each sorted run from top to bottom. 
As the total data range is at least the size of the largest level in the vertical scheme, which is significantly larger than the total data size in the horizontal part, the probability of an arbitrary key appearing in the horizontal part is very low.
Therefore, we model the lookup cost of the horizontal-leveling scheme by multiplying the number of levels $\ell$ with the false positive rate of the Bloom filters $f$ (i.e. $R_l = \ell \cdot f$).
The analysis of the lookup cost for the horizontal-tiering scheme is quite intricate. Since each run exists for varying and irregular periods, they contribute differently to the overall read cost. Below we present a novel analysis results in the following lemma stating the lookup cost under horizontal-tiering scheme. 
Please refer to our technical report~\cite{techreport} for a complete proof.
\begin{lemma}
Let $m$ be the integer that satisfies ${{m}\choose{\ell}} \leq n \leq {{m+1}\choose{\ell}}$. Then the lookup cost of the horizontal-tiering scheme can be computed by the following equation.
\begin{equation}
R_t = \left(\ell \cdot {{m}\choose{\ell+1}} + (m-\ell+1)\cdot\left(n-{{m}\choose{\ell}}\right)\right)\cdot \frac{f}{n}
\label{equation: rt}
\vspace{-4mm}
\end{equation}
\label{lemma: rt}
\end{lemma}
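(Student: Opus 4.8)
\textbf{Proof proposal for Lemma~\ref{lemma: rt}.}

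The plan is to compute the total read cost by carefully accounting for the lifetime of every sorted run in every level, then dividing by $n$ to get the per-lookup expectation (since each ingested buffer brings $x$ lookups, the $x$ cancels between numerator and denominator after normalizing by the total $nx$ lookups, and $f$ enters through the false-positive probability of touching each run). First I would recall from Lemma~\ref{lemma: binomial volume} and the structure of Algorithm~\ref{alg: horizontal tiering} the precise schedule of compactions: with counters initialized to $k$ where $\binom{k+\ell-1}{\ell} = n$, the number of buffer flushes between the $j$-th and $(j{+}1)$-th compaction into a given level is governed by a nested binomial pattern. The key combinatorial fact I would establish is that, viewing the whole run of the algorithm, the compactions to Level $i$ partition the $n$ flushes into blocks whose sizes are themselves binomial coefficients $\binom{\cdot}{i-1}$, and a run created at Level $i$ at ``time'' $t$ (measured in buffer flushes) survives until the next compaction out of Level $i$.

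The main computation is then: for each level $i$, sum over all runs $r$ at that level the quantity (number of lookups during $r$'s lifetime) $= x \cdot (\text{lifetime of } r \text{ in flush units})$. I would organize this as a double sum --- over levels, and within a level over the sequence of runs --- and show it telescopes into the two terms appearing in Equation~\eqref{equation: rt}. Concretely, I expect the ``bulk'' contribution (the runs that are created and destroyed entirely within the ``full'' phase where counters follow the clean binomial recursion) to produce the $\ell \cdot \binom{m}{\ell+1}$ term, using the hockey-stick identity $\sum_j \binom{j}{\ell} = \binom{m}{\ell+1}$ to collapse the nested sums; and the ``boundary'' contribution (runs still alive when the $n$-th flush arrives, i.e., the partially-filled top block of size $n - \binom{m}{\ell}$, which persists across all $m-\ell+1$ of the relevant outer iterations) to produce the $(m-\ell+1)\cdot(n-\binom{m}{\ell})$ term. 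Defining $m$ via $\binom{m}{\ell} \le n \le \binom{m+1}{\ell}$ is exactly what lets us split the schedule into a ``complete'' part of size $\binom{m}{\ell}$ and a ``leftover'' part of size $n - \binom{m}{\ell}$.

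I would carry this out by induction on $\ell$, peeling off the largest level: by the decomposition already used in the proof sketch of Theorem~\ref{lemma: horizontal tiering optimality}, the total cost splits as (cost within the first $\ell-1$ levels, summed over the outer-level blocks, which is a sum of $\psi(\cdot,\ell-1)$-costs by the inductive hypothesis) plus (cost at Level $\ell$ itself, which counts how long each run deposited into Level $\ell$ lingers there). The inductive step requires re-indexing these $\ell-1$-level subproblems --- each of size $\binom{j+\ell-2}{\ell-1}$ for $j$ ranging appropriately --- and summing the resulting closed forms, which is where the hockey-stick identity does the heavy lifting. The base cases $\ell = 1$ (a single level accumulating $n$ runs, total cost $x\sum_{t=1}^{n}(n-t+1) = x\binom{n+1}{2}$, matching the formula with $m = n$) and $n = \binom{m}{\ell}$ exactly (no leftover term) pin down the constants.

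\textbf{Anticipated main obstacle.} The hard part will be the bookkeeping of run lifetimes across the recursive block structure --- in particular handling the ``leftover'' block of size $n - \binom{m}{\ell}$ correctly, since this partial block behaves differently from the complete binomial blocks: it has not yet triggered its own compaction cascade, so its runs are still present at every level they have reached and must be counted against all subsequent outer iterations. Getting the multiplicity $(m - \ell + 1)$ exactly right (rather than off by one) demands care about whether the $n$-th flush itself triggers a compaction and about the boundary convention in $\binom{m}{\ell} \le n \le \binom{m+1}{\ell}$. A secondary subtlety is justifying rigorously that each lookup pays $f$ per run independently of run size (the footnote's uniform-workload argument), so that the cost genuinely factors as $f/n$ times a purely combinatorial count; I would state this as a modeling assumption consistent with the rest of the paper rather than re-derive it.
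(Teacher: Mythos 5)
Your route is genuinely different from the paper's. The paper's own proof of Lemma~\ref{lemma: rt} is a two-line corollary: it invokes Theorem~\ref{lemma: horizontal tiering optimality} (Algorithm~\ref{alg: horizontal tiering} realizes the optimal compaction sequence) and then reads the closed form off Lemma~\ref{lemma: horizontal tiering cost}, whose formula for the optimal cost $\tau(n,\ell)$ is established not by analyzing the algorithm's schedule but by the dynamic-programming recursion of Lemma~\ref{lemma: uniform sub tiering}, $\tau(n,\ell)=\min_i\{\tau(i,\ell-1)+(n-i)r+\tau(n-i,\ell)\}$, together with an induction that pins the optimal split point between $\binom{m-1}{\ell}$ and $\binom{m}{\ell}$ via Pascal's rule. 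You instead propose to compute the algorithm's cost directly by run-lifetime accounting over its concrete schedule (derived from Lemma~\ref{lemma: binomial volume}), with induction on $\ell$ and hockey-stick collapses. This is viable and has a real advantage in dependency structure: it proves the lemma without needing the optimality theorem at all. The price is that you must fully characterize the counter dynamics of Algorithm~\ref{alg: horizontal tiering} (when each compaction into each level fires, including mid-stream behavior for $n$ strictly between $\binom{m}{\ell}$ and $\binom{m+1}{\ell}$), which is exactly the bookkeeping the paper sidesteps; and the induction you sketch largely reproduces the same binomial manipulations the paper carries out for $\tau(n,\ell)$, so little work is saved.

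One concrete step in your sketch fails as written: the $\ell=1$ base case. Under the cost convention used by the lemma (and by the paper's Lemma~\ref{lemma: uniform sub tiering}), only the $r$ (equivalently $x$) lookups in the gaps \emph{between} consecutive flushes are charged, so a run created at flush $t$ lives through $n-t$ gaps and the total is $x\binom{n}{2}$ — which is what Equation~\ref{equation: rt} gives with $m=n$, since the leftover term vanishes. Your lifetime $n-t+1$ charges an extra gap after the $n$-th flush and yields $x\binom{n+1}{2}=x\binom{n}{2}+xn$, which does \emph{not} match the formula you claim it matches; the same off-by-one convention, if carried into the inductive step, would shift every level's contribution. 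The fix is easy (lifetime $=n-t$, i.e., stop charging at the $n$-th flush), but since you rely on the base cases to ``pin down the constants,'' this must be corrected before the telescoping argument can close, and the same care is needed when you argue the leftover block's multiplicity $(m-\ell+1)$.
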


In terms of incurred overhead, the primary distinction between range lookups and point lookups lies in their interaction with the runs in the horizontal part of the LSM-tree. Specifically, for each existing run, point lookups typically result in a disk I/O operation only in cases of false positives of Bloom filters, whereas range lookups invariably incur a disk I/O for every run. Aside from this difference, the derivations for point lookups and range lookups are identical. Therefore, the range lookup costs under leveling and tiering satisfy \( Q_l = \frac{R_l}{f} \) and \( Q_t = \frac{R_t}{f} \), respectively.

\vspace{0.5mm}
\noindent\underline{\textit{Update Cost.}}
With tiering, each entry would only be involved in one compaction at each level.
Therefore, the I/O cost per update under the horizontal-tiering scheme can be modeled by dividing the number of levels $\ell$ by the page size in entries $P$, reminding that each disk I/O moves an entire page of entries in bulk.
Furthermore, we employ the following lemma to calculate the update cost of the horizontal-leveling scheme.
\begin{lemma}
Let $m$ be the integer that satisfies ${{m}\choose{\ell}} \leq n \leq {{m+1}\choose{\ell}}$. Then the update cost of the horizontal-leveling scheme can be represented as the following equation.
\begin{equation}
W_l = \left(\ell \cdot {{m+1}\choose{\ell+1}} + (m+1)\cdot\left(n-{{m}\choose{\ell}}\right)-(\ell-1)\cdot n\right)\cdot \frac{1}{n\cdot P}
\end{equation}
\label{lemma: wl}
\end{lemma}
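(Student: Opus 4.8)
The statement is a closed-form evaluation of a fully determined quantity --- the total write volume charged by the cost model to the deterministic horizontal-leveling schedule of Algorithm~\ref{alg: horizontal} over its first $n$ buffer flushes, divided by $nP$ --- so the plan is purely computational: first pin down the schedule exactly, then sum and collapse with binomial identities.

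The structural backbone is the degree-$\ell$ combinatorial number system, in the same spirit as Lemma~\ref{lemma: binomial volume}. I would prove by induction on the flush count that, writing the unique representation $n=\binom{c_\ell}{\ell}+\binom{c_{\ell-1}}{\ell-1}+\cdots+\binom{c_1}{1}$ with $c_\ell>c_{\ell-1}>\cdots>c_1\ge 0$, the configuration of Algorithm~\ref{alg: horizontal} immediately after flush $n$ (and the cascade it triggers) has level $i$ holding exactly $\binom{c_i}{i}$ buffers; consequently the $k$-th compaction into level $i$ fires at flush index $\binom{i+k-1}{i}$ and, being a leveling merge, writes a destination of size $\binom{i+k-1}{i}$. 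This induction is essentially the same bookkeeping as for the tiering variant; the only point needing care is the behaviour at boundaries $n=\binom{m+1}{\ell}$, where one flush resets several levels at once.

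With the schedule in hand, the total write volume is the sum over levels $i=1,\dots,\ell$ of what is charged at level $i$ --- each leveling merge into level $i$ contributing its merged destination size, and each flush into level $1$ its resulting level-$1$ size. Over the \emph{completed} cycles these are hockey-stick sums $\sum_{k=1}^{K}\binom{i+k-1}{i}=\binom{i+K}{i+1}$, and aggregating them across the $\ell$ levels accounts for the $\ell\binom{m+1}{\ell+1}$ term (with $m$ the index of the statement, $\binom{m}{\ell}\le n\le\binom{m+1}{\ell}$); the single cycle still in progress at flush $n$ supplies the linear-in-$n$ remainder, which after charging its write volume uniformly across its $\binom{m+1}{\ell}-\binom{m}{\ell}$ flushes contributes the $(m+1)(n-\binom{m}{\ell})-(\ell-1)n$ part. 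Simplifying with Pascal's rule and the hockey-stick identity yields the stated formula; rearranging it one finds $W_l=\tfrac{R_t}{fP}+\tfrac1P$ with $R_t$ as in Lemma~\ref{lemma: rt}, which both suggests a shortcut (reuse the Lemma~\ref{lemma: rt} computation: every sweep-up rewrite at a level of the leveled tree corresponds to one lookup I/O charged to a surviving run at the matching level of the tiered tree, with a further $\tfrac1P$ for the one-time flush of each buffer) and serves as an independent check of the algebra.

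The delicate step is the in-progress cycle: because Algorithm~\ref{alg: horizontal} cascades, when $n$ lies strictly inside $(\binom{m}{\ell},\binom{m+1}{\ell})$ the partial states of all $\ell$ levels are coupled, so the write volume accrued so far must be computed by peeling the levels off one at a time and re-indexing nested binomial sums --- this is where the interplay between $m$ and $\ell$ makes the algebra fiddly. A cleaner route, which I would run in parallel as a check, is to set up the recursion exactly as in the proof of Theorem~\ref{lemma: horizontal tiering optimality}: split the horizon at the last compaction into the largest level, solve the two resulting sub-instances recursively, and verify the recurrence against the claimed closed form by a double induction on $\ell$ and $n$, with base case $\ell=1$ (a single leveled level has total write volume $\binom{n+1}{2}$, hence $W_l=(n+1)/(2P)$).
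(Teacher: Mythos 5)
Your fallback route---splitting at the last compaction into the largest level and verifying the closed form by induction on $n$ and $\ell$---is essentially the paper's own proof: its sketch formulates the write cost as Problem~2 and then reuses the recursive decomposition and closed-form induction from Theorem~\ref{lemma: horizontal tiering optimality} and Lemma~\ref{lemma: rt}. Your primary route, a direct tally of the schedule of Algorithm~\ref{alg: horizontal}, is different in spirit but has a concrete gap. The combinatorial-number-system description of the state after flush $n$ is fine, but the consequence you draw from it is false for intermediate levels: already for $\ell=3$, compactions into Level~2 fire at flushes $1,2,4,5,7,10,\dots$, not at $\binom{k+1}{2}=1,3,6,10,\dots$, and Level~2 is emptied each time it spills into Level~3, so its merge destinations are not of size $\binom{i+k-1}{i}$ either. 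The firing-time/size claim is correct only for the last level, so the per-level hockey-stick sums do not aggregate to $\ell\binom{m+1}{\ell+1}$ in the way you describe; for $1<i<\ell$ the inter-compaction gaps at level $i$ grow with the current counter of the next deeper level and reset when that level spills, and this nested-cycle bookkeeping is exactly the part you defer as ``fiddly.''

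Second, the $-(\ell-1)\cdot n$ term cannot be attributed to the in-progress cycle: at a cycle boundary $n=\binom{m}{\ell}$ the formula reads $\ell\binom{m+1}{\ell+1}-(\ell-1)\binom{m}{\ell}$, so the completed cycles already carry a $-(\ell-1)\times(\text{flush count})$ correction. That correction comes from the accounting convention the paper states in its footnote to Problem~2: a flush-triggered cascade of compactions is charged as a single write at its final destination, saving one rewrite per level per entry. Your literal charging rule (``each leveling merge into level $i$ contributes its merged destination size, and each flush its resulting Level-1 size'') over-counts by exactly this amount and cannot reduce to the stated formula. Moreover, even with that convention the closed form equals the schedule's cumulative write volume only at boundaries $n=\binom{m+1}{\ell}$ (for $\ell=3$ it gives $7$ at $n=4$ and $25$ at $n=10$, matching the schedule, but $10$ at $n=5$ where the cumulative tally is $8$); off-boundary it is the uniform amortization over the current cycle, so a ``sum the schedule'' proof must state and use both the cascade-merging convention and this amortization. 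Your identity $W_l=\frac{R_t}{fP}+\frac{1}{P}$ is correct as algebra and is a good sanity check, but the bijective justification is only sketched; as written, it is the recursion route, not the direct tally, that actually goes through.
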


\vspace{-2mm}
\noindent
\textbf{Navigate Toward the Best Design.}
For a given design, we can weight its I/O cost of the update, point lookup, and range lookup  $W,\ R,$ and $Q$ with their respective proportion in the workload, denoted as $w,\ r,$ and $q$, to derive the average per operation I/O cost incurred by this design.
\begin{equation}
    \zeta = w\cdot W+r\cdot R+q\cdot Q
\end{equation}
We explore all possible configurations for the horizontal part of {\ourmethod}. Specifically, leveraging the convexity of the cost functions under both leveling and tiering merge policies with respect to the number of levels \( \ell \), we initiate our search from the minimum feasible value, \( \ell = 2 \) and increase \( \ell \). The optimal setting for each merge policy is identified by locating the saddle point of the cost function \( \zeta \), where the function transitions from decreasing to increasing. Finally, we select the superior design between the leveling and tiering merge policies.
To enhance the efficiency of self-tuning, we employ binary search, noting that the maximum value of \( \ell \) cannot exceed \( n \). Therefore, the complexity of the self-tuning process is only \( O\left(\log_2 n\right) \), which is efficient practically.

\vspace{-2mm}
\subsection{Optimization for Skewed Workloads}
\label{Sec: skewed}

As an optional technique, we can further optimize our technique by making use of skewness information. We adopt the definition of skewed distribution from previous works~\cite{chatterjee2021cosine}, which introduces a sharp distinction between hot and cold keys for analytical convenience. The skewed distribution comprises two distinct uniform distributions: \( U_h \) and \( U_c \), where \( U_h \) represents a small set of frequently updated hot keys, and \( U_c \) denotes a much larger set of regular cold keys.
We assume that the hot key set has a high probability of appearing in each buffer flush, and it is likely that each subsequent run containing all the keys from the hot set. For instance, under a uniform workload, merging two runs of size \( D \) during compaction results in a new run of size \( 2D \) because the keys are unlikely to overlap. In contrast, under a skewed workload, both runs contain the hot keys, and obsolete versions are discarded during compaction. This leads to a run size of only \( 2D - U_h \), while \( U_h  \)  redundant hot keys are eliminated. Therefore, under the leveling merge policy, each level grows more slowly with each compaction as the workload becomes more skewed.
Whereas, under the tiering merge policy, the impact of workload skewness on performance is less significant because new data flowing in each level does not merge with existing data.

We then discuss how we adapt the compaction strategy when the impact of skewness is more pronounced, i.e., under the leveling merge policy. The write amplification of a level, incurred by the compaction of its upper level, is proportional to the current size of the level. 
Under skewed workloads, the growth of levels is slower than in a uniform workload because more duplicate entries appear in a compaction, leading to a slower increase in write amplification. Consequently, we see a diminished benefit of performing compaction, which can reduce the write amplification of subsequent compactions. 
Building on this insight, it is beneficial to align the compaction frequency with the workload skewness. 
Interestingly, only minor modifications to Algorithm~\ref{alg: horizontal} are necessary. Specifically, we introduce a coefficient \( \alpha = \frac{U_h}{B} \) to quantify the degree of workload skewness, where \( B \) is the buffer size. This coefficient also reflects the slowdown in the growth of the first level's size under a skewed workload, as each hot key has a high probability of appearing in each buffer flush. As the workload becomes more skewed, 
or \( \alpha \) increases, compactions should occur less frequently since the level grows slower. 
This adjustment is implemented by increasing the compaction trigger threshold of the first level based on skewness $\alpha$. While there are multiple ways to achieve the adjustment, in this paper, we change the condition in Line 6 in Algorithm~\ref{alg: horizontal} to $C_i>C_{i+1}+\delta$ when $i=1$, where 
\(\delta\) is the largest integer satisfying
\begin{equation}
\vspace{-1mm}
    \frac{\alpha}{1-\alpha} \geq \frac{\delta\cdot(\delta+1)}{2}.
\label{equation: delta}
\vspace{0.5mm}
\end{equation}
In Equation~\ref{equation: delta}, the term \({\alpha}/(1-\alpha)\) represents a skewness-related coefficient, reflecting the ratio of hot keys in each flush. Notably, it shows an approximately quadratic relationship with \(\delta\). Equation~\ref{equation: delta} is grounded in a deep intuition comparing the benefits and costs in adjusting \(\delta\). Due to space limitation, we refer interested readers to our technical report~\cite{techreport} for a rigorous derivation of this equation.

\subsection{{\ourmethod} as a Backbone}
\label{sec: embed}

For existing LSM-based designs using the vertical scheme, our improved {\ourmethod} can serve as a backbone to replace the vertical scheme in enhancing their system performance. 

For example, we consider the well-known LSM-based design lazy-leveling, proposed in Dostoevsky~\cite{dostoevsky2018}. In lazy-leveling, the largest level of the LSM-tree is configured using the leveling merge policy, while the remaining levels employ tiering. We can embed {\ourmethod} into this design by replacing all the tiering levels with the horizontal part of {\ourmethod} utilizing the horizontal-tiering scheme. Specifically, suppose the original lazy-leveling design has \( L \) levels. To maintain consistency during embedding, we set the number of levels \( \ell \) in the horizontal part to \( L - 1 \) and adjust its capacity to \( B \cdot T^{L - 1} \), corresponding to the size of the largest level being replaced. In this way, we integrate {\ourmethod} into lazy-leveling without altering the original configuration.
Since the number of levels using tiering remains unchanged after the replacement, the update cost is the same as that of the original lazy-leveling design. Meanwhile, the lookup cost is improved thanks to the nice property of horizontal scheme (Theorem~\ref{lemma: horizontal tiering optimality}). 


\vspace{1mm}
\noindent\textbf{Introducing New Bloom Filter Layout.}
Some LSM-based designs employ the Monkey Bloom filter layout~\cite{dayan2017monkey} to improve their point lookup performance. The idea is to allocate higher bits-per-key for Bloom filters at smaller levels. 
However, the optimization strategy rests on the assumption that the data size within every level consistently remains at that level's capacity. 
This assumption is not accurate under the full compaction granularity employed in many state-of-the-art LSM-based designs~\cite{liu2024structural, wang2024grf, dayanspooky}. Because each full compaction empties a level after merging all its data. Consequently, the level would progress from being vacant to reaching its full capacity before another compaction occurs. This problem also impedes the compatibility of the Monkey layout with {\ourmethod}.



To address this problem, we propose a new dynamic Bloom filter layout, which is designed to achieve better read optimization under the horizontal scheme by more precisely modeling full compaction processes. 
Unlike the static allocation of the Bloom filter under the Monkey layout, the dynamic layout adaptively adjusts the Bloom filter's allocation in response to changes in data volume for each level and incorporates future changes in data volume into the model's considerations. However, directly adjusting Bloom filter's allocation requires fetching the original data indexed by the Bloom filter from disk into memory in order to reconstruct the Bloom filter based on the new bits per key, which incurs significant additional overhead. To circumvent this issue, we never directly adjust Bloom filter's allocation under the dynamic layout. Each time a full compaction occurs between two levels and data is loaded into memory, we simultaneously reallocate and reconstruct the Bloom filter associated with those two levels.

\section{Additional Related Works}
\noindent\textbf{LSM-tree-based Designs.}
Recent LSM-tree-based optimizations include exploring innovative compaction policies and routines ~\cite{dayan2017monkey, dostoevsky2018, idreos2019designcontinuum, chatterjee2021cosine, huynh2021endure, dayan2019log, sears2012blsm, dayanspooky, sarkar2020lethe, raju2017pebblesdb}, improving Bloom filters~\cite{dayan2021chucky, zhang2018elasticbf, zhu2021reducing, dayan2017monkey} and range filters~\cite{luo2020rosetta,knorr2022proteus,zhang2018surf}, as well as harnessing novel hardware for performance gains~\cite{ahmad2015compaction, huang2019x, vinccon2018noftl, zhang2020fpga, wang2014efficient, yu2022treeline}.
Techniques have also been investigated for key-value separation~\cite{chan2018hashkv, lu2017wisckey}, managing in-memory hot entries~\cite{balmau2017triad}, enhancing concurrency~\cite{golan2015scaling, shetty2013building}, and prefix indexing~\cite{luo2020rosetta, zhang2018surf, wu2015lsmtrie}. 
Furthermore, some efforts have been dedicated to practical performance enhancements by reducing tail latency~\cite{balmau2019silk, luo2019performance, sears2012blsm}, exploiting workload characteristics~\cite{absalyamov2018lightweight, ren2017slimdb, yang2020leaper}, and refining the memory distribution~\cite{bortnikov2018accordion, kim2020robust, luo2020breaking}.
To the best of our knowledge, most of these works are either based on the vertical growth scheme or are orthogonal to the growth scheme, as the vertical scheme is typically considered the default in LSM-trees. 

\vspace{1mm}
\noindent\textbf{Studies on Horizontal {\Scheme}.}
Over the past decade, several studies~\cite{mathieu2014bigtable, mathieu2021competitive, mao2019experimental, mao2021comparison, mao2023comparison} have recognized the advantages of the horizontal scheme and have focused on its analysis and application. Mathieu \textit{et al.}~\cite{mathieu2021competitive} analyzed and optimized the horizontal growth scheme in BigTable from the perspective of competitive ratios. In contrast, Mao \textit{et al.}~\cite{mao2019experimental, mao2021comparison, mao2023comparison} conducted a series of experimental studies comparing the vertical and horizontal schemes under regular and spatial workloads. Although these studies are insightful, their efforts primarily concentrated on theoretical and experimental investigations of the existing horizontal growth scheme using the leveling merge policy, without proposing new designs.
In this paper, we further enhance the applicability of the horizontal scheme through a novel design. Specifically, we introduce the horizontal-tiering scheme, which extends the horizontal scheme to the tiering merge policy, and present {\ourmethod}, an enhanced growth scheme that embodies the strengths of both the vertical and horizontal schemes.


\vspace{-2mm}
\section{Evaluation}
 
By default, our experiments are conducted on a server with Intel(R) Gold 6326 CPU @ 2.90GHz processor, 256GB DDR4 memory, and 1TB NVMe SSD, running 64-bit Ubuntu 20.04.4 LTS on an ext4 partition.


\vspace{1mm}
\noindent\textbf{Baselines and Implementation.}
We assess {\ourmethod} (abbr. VRN) against ten baselines. 
The implementation of {\ourmethod} and all the baselines are built on Facebook's RocksDB \cite{rocksdb}, a popular LSM-tree-based key-value storage system and also been employed by many previous LSM-tree studies~\cite{dayan2017monkey, dostoevsky2018, ren2017slimdb, lu2017wisckey, dayan2019log, dayanspooky, luo2020breaking, zhang2018surf, yu2022treeline, chan2018hashkv}.

The first four baselines are all possible combinations of compaction schemes and merge policies discussed in our technical section: vertical-leveling (abbr. VT-Level-Part), vertical-tiering (abbr. VT-Tier-Part), horizontal-leveling (abbr. HR-Level), and horizontal-tiering (abbr. HR-Tier). By default, the vertical schemes employ partial compaction, while the horizontal schemes use full compaction. 
This configuration makes vertical-leveling equivalent to the default setup of RocksDB. For the other three baselines, we extended RocksDB's code to implement them.
{
 We also included two additional baselines derived from the default RocksDB (i.e., Vertical-Level-Part): “Universal” and “RocksDB-Tuned”. They share most of the same settings with Vertical-Level-Part but are extended in two key ways. Specifically, Universal enables RocksDB’s universal compaction~\cite{universal}, an age-based compaction scheme aligned with the tiering merge policy, whereas RocksDB-Tuned applies additional tunings to certain RocksDB parameters specific to the vertical growth scheme.
~\footnote{Specifically, we enable "level\_compaction\_dynamic\_level\_bytes", set "max\_bytes\_for\\\_level\_multiplier" to 10, and configure "compaction\_pri" to "kOldestSmallestSeqFirst".}}
Furthermore, although not commonly used in practical applications, designs based on the vertical scheme can adopt full compaction to sacrifice fine-grained file-level compaction for improved performance. This leads to two additional baselines: vertical-leveling with full compaction (abbr. VT-Level-Full) and vertical-tiering with full compaction (abbr. VT-Tier-Full).
Additionally, we compare two designs based on {\ourmethod} but without self-tuning. Instead, they use a fixed leveling design and a fixed tiering design with \( \ell = 2 \) in the horizontal part, referred to as VRN-Level and VRN-Tier, respectively.

\vspace{1mm}
\noindent\textbf{Experimental Setting.} {
For all methods, we retain most orthogonal parameters as RocksDB’s default settings, which are known to be developer-optimized for general SSD-based applications~\cite{tuningguide}. Furthermore, following previous works~\cite{dostoevsky2018,huynh2021endure}, we enable direct I/O by setting the parameters “use\_direct\_io\_for\_flush\_and\_compaction” and “use\_direct\_reads” to true in RocksDB. We configure the memory buffer size to 2 MB by setting “write\_buffer\_size” and “target\_file\_size\_base” to 2,097,152, and set the LSM-tree size ratio \(T=6\) by assigning “max\_bytes\_for\_level\_multiplier” to 6. 
}
We evaluate the impact of different Bloom filter settings, varying the bits-per-key from 4 to 20 (Figure~\ref{fig:large memory exp}), while using a default value of 5 in other experiments, as this aligns with the latest RocksDB default. 
We also analyze the impact of block cache size, varying from 16MB to 64GB (Figure~\ref{fig:large memory exp} (e)), with a default of either 32MB or 64GB in other experiments. Specifically, Figure~\ref{exp: throughput}, Figure~\ref{fig:large memory exp} (a) and (d), Figure~\ref{exp: tradeoff and embed} (b) and (c), and Figure~\ref{fig: large scale exp} (a) use a 32MB block cache, while Figure~\ref{fig:large memory exp} (b) and (c), Figure~\ref{exp: tradeoff and embed} (d) and (e), along with Figure~\ref{fig: large scale exp} (b) use a 64GB block cache. These two configurations are chosen because a 32MB block cache aligns with the default setting of the latest RocksDB version, while a 64GB block cache represents scenarios with abundant memory resources. Evaluating both settings reveals different performance rankings among the baselines, providing a more comprehensive understanding of system performance.  
We evaluate each under various workloads with different lookup and update ratios; we initially bulk-load $10^7$ key-value entries of 1KB into the LSM-tree. Each entry is generated through the YCSB benchmark with either uniform or Zipfian distribution, comprising a 128-byte key and an 896-byte value.
For testing, we generate and process a workload with 4,000,000 operations, each can be a lookup or an update, and report the overall average throughput and the worst-case throughput of processing the workload. To measure the worst-case throughput, we maintain a time window containing the most recent 100,000 processed operations and record the lowest throughput observed within this window, which is then reported as the worst-case throughput.
To measure space amplification, we record the peak disk space occupied by RocksDB's storage directory during runtime. By subtracting the data size from the total space used by the LSM-tree, and then dividing the difference by the size of unique entries, we derive the additional space amplification per entry. 

\begin{figure}[t]
\vspace{0mm}
    \centering  
    \hspace{-2mm}
    \begin{minipage}[h]{.68\linewidth}
    \centering  
    \vspace{2mm}
    \begin{minipage}[h]{1.38\linewidth}
    \hspace{3mm}
    \vspace{1mm}
    \includegraphics[width=1.00\linewidth]{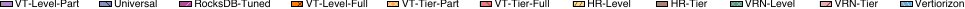}
    \end{minipage}
    \vspace{0.5mm}
    \begin{overpic}[width=1.0\linewidth]{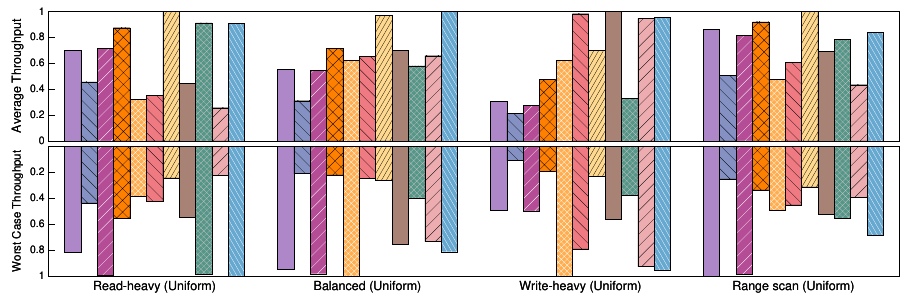}\put(42, -2){\color{black}{\sffamily\footnotesize{(a) Uniform queries}}}
    \end{overpic}\\
    \vspace{2mm}
    \begin{overpic}[width=1.0\linewidth]{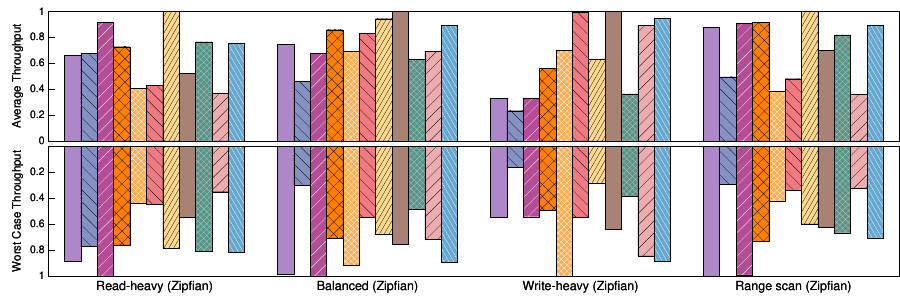}\put(42, -2){\color{black}{{\sffamily\footnotesize(b) Zipfian queries}}}
    \end{overpic}\\
    \end{minipage}
    \hspace{2mm}
    \begin{minipage}[h]{.3\linewidth}
    \begin{minipage}{1\linewidth}
    \vspace{-2mm}
        \begin{overpic}[width=0.95\linewidth]{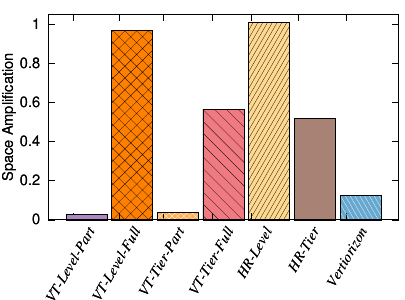}\put(50, 65){\color{black}{\sffamily\footnotesize{(c)}}}
    \end{overpic}
    \vspace{1mm}
    \end{minipage}
    \scalebox{0.68}{
    \small
\addtolength{\tabcolsep}{-4pt}
\renewcommand\arraystretch{1.1}
\sffamily
\begin{tabular}{|cc|cccc|cc|cc|c|}
\hline
\multicolumn{2}{|c|}{} & \multicolumn{4}{c|}{Vertical} & \multicolumn{2}{c|}{Horizon} & \multicolumn{2}{c|}{Hybrid} &  \\ \cline{3-10}
\multicolumn{2}{|c|}{\multirow{-2}{*}{\begin{tabular}[c]{@{}c@{}}\textbf{(d)} \\ Metric\end{tabular}}} & LP & LF & TP & TF & {\ \ L\ \ } & T & L & T & \multirow{-2}{*}{\begin{tabular}[c]{@{}c@{}}VRN \end{tabular}} \\ \hline
\multicolumn{1}{|c|}{} & RH & \cellcolor[HTML]{FFFE65}5 & \cellcolor[HTML]{FFFE65}4 & \cellcolor[HTML]{FD6864}8 & \cellcolor[HTML]{FE996B}7 & \cellcolor[HTML]{67FD9A}1 & \cellcolor[HTML]{FE996B}6 & \cellcolor[HTML]{67FD9A}2 & \cellcolor[HTML]{FD6864}9 & \cellcolor[HTML]{67FD9A}2 \\ \cline{2-2}
\multicolumn{1}{|c|}{} & B & \cellcolor[HTML]{FD6864}9 & \cellcolor[HTML]{67FD9A}3 & \cellcolor[HTML]{FE996B}7 & \cellcolor[HTML]{FFFE65}5 & \cellcolor[HTML]{67FD9A}2 & \cellcolor[HTML]{FFFE65}4 & \cellcolor[HTML]{FD6864}8 & \cellcolor[HTML]{FFFE65}5 & \cellcolor[HTML]{67FD9A}1 \\ \cline{2-2}
\multicolumn{1}{|c|}{} & WH & \cellcolor[HTML]{FD6864}9 & \cellcolor[HTML]{FE996B}7 & \cellcolor[HTML]{FE996B}6 & \cellcolor[HTML]{67FD9A}1 & \cellcolor[HTML]{FFFE65}5 & \cellcolor[HTML]{67FD9A}1 & \cellcolor[HTML]{FD6864}8 & \cellcolor[HTML]{67FD9A}3 & \cellcolor[HTML]{67FD9A}3 \\ \cline{2-2}
\multicolumn{1}{|c|}{\multirow{-4}{*}{\rotatebox[origin=c]{90}{Average}}} & RS & \cellcolor[HTML]{67FD9A}3 & \cellcolor[HTML]{67FD9A}2 & \cellcolor[HTML]{FD6864}8 & \cellcolor[HTML]{FE996B}7 & \cellcolor[HTML]{67FD9A}1 & \cellcolor[HTML]{FE996B}6 & \cellcolor[HTML]{FFFE65}5 & \cellcolor[HTML]{FD6864}9 & \cellcolor[HTML]{67FD9A}3 \\ \cline{1-2}
\multicolumn{1}{|c|}{} & RH & \cellcolor[HTML]{67FD9A}3 & \cellcolor[HTML]{FCFF2F}4 & \cellcolor[HTML]{FE996B}7 & \cellcolor[HTML]{FE996B}6 & \cellcolor[HTML]{FD6864}8 & \cellcolor[HTML]{FFFE65}4 & \cellcolor[HTML]{67FD9A}1 & \cellcolor[HTML]{FE996B}8 & \cellcolor[HTML]{67FD9A}1 \\ \cline{2-2}
\multicolumn{1}{|c|}{} & B & \cellcolor[HTML]{67FD9A}2 & \cellcolor[HTML]{FD6864}9 & \cellcolor[HTML]{67FD9A}1 & \cellcolor[HTML]{FD6864}8 & \cellcolor[HTML]{FE996B}7 & \cellcolor[HTML]{FFFE65}4 & \cellcolor[HTML]{FE996B}6 & \cellcolor[HTML]{FFFE65}5 & \cellcolor[HTML]{67FD9A}3 \\ \cline{2-2}
\multicolumn{1}{|c|}{} & WH & \cellcolor[HTML]{FE996B}6 & \cellcolor[HTML]{FD6864}9 & \cellcolor[HTML]{67FD9A}1 & \cellcolor[HTML]{FFFE65}4 & \cellcolor[HTML]{FD6864}8 & \cellcolor[HTML]{FFFE65}5 & \cellcolor[HTML]{FE996B}7 & \cellcolor[HTML]{67FD9A}3 & \cellcolor[HTML]{67FD9A}2 \\ \cline{2-2}
\multicolumn{1}{|c|}{\multirow{-4}{*}{\rotatebox[origin=c]{90}{Worst Case}}} & RS & \cellcolor[HTML]{67FD9A}1 & \cellcolor[HTML]{FD6864}8 & \cellcolor[HTML]{FFFE65}5 & \cellcolor[HTML]{FE996B}6 & \cellcolor[HTML]{FD6864}8 & \cellcolor[HTML]{FFFE65}4 & \cellcolor[HTML]{67FD9A}3 & \cellcolor[HTML]{FE996B}7 & \cellcolor[HTML]{67FD9A}2 \\ \cline{1-2}
\multicolumn{2}{|c|}{Space} & \cellcolor[HTML]{67FD9A}1 & \cellcolor[HTML]{FD6864}8 & \cellcolor[HTML]{67FD9A}2 & \cellcolor[HTML]{FE996B}7 & \cellcolor[HTML]{FD6864}9 & \cellcolor[HTML]{FE996B}6 & \cellcolor[HTML]{67FD9A}3 & \cellcolor[HTML]{67FD9A}3 & \cellcolor[HTML]{67FD9A}3 \\ \hline
\multicolumn{2}{|c|}{Avg.} & 4.3 & 5.1 & 5 & 5.7 & 5.4 & 4.4 & 4.8 & 5.8 & 2.2 \\ \hline
\end{tabular}
}
\vspace{-7mm}
\end{minipage}
\hspace{-2mm}

\vspace{0mm}
\caption{{\ourmethod} shows desirable performance across all scenarios and metrics. Table (d) presents the ranking of each method under various metrics.
(For methods, "L" means leveling, "T" means tiering, "P" means partial, "F" means full, and "VRN" is {\ourmethod}. For workloads, "RH" is read-heavy, "B" is balanced, "WH" is write-heavy,  and "RS" is range scan.)  }
\vspace{0mm}
\label{exp: throughput}
\end{figure}


\vspace{1mm}
\noindent\textbf{{\ourmethod} achieves desirable performance across all scenarios and dimensions.}
In Figure~\ref{exp: throughput}~(a) and (b), we compare {\ourmethod} against all baselines in terms of average throughput and worst-case throughput across various YCSB workloads under both uniform and Zipfian distributions, with y-axis standardized to facilitate comparison. The workloads include balanced workload (50\% updates and 50\% point lookups), write-heavy workload (90\% updates and 10\% point lookups), read-heavy workload (10\% updates and 90\% point lookups), and range scan workload (25\% range lookups and 75\% updates). 
Overall, the results under the uniform and Zipfian distributions are similar. 
From Figure~\ref{exp: throughput} we have the following discussions. 

\vspace{0.5mm}
\noindent\underline{\textit{Results on read-heavy workload.}} 
(1) Leveling-based approaches consistently achieve better overall performance compared to tiering-based methods, because tiering-based methods tend to suffer from high read amplification, resulting in lower throughput in read-heavy scenarios.
(2) Under the vertical scheme, the full compaction method (e.g., VT-Level-Full) 
delivers superior average throughput compared to its corresponding partial compaction method (e.g., VT-Level-Part). This is because partial compaction introduces additional write amplification. However, the worst-case throughput of VT-Level-Full is lower than VT-Level-Part, because full compaction leads to significant write stalls when merging large levels. We also note that the worst-case throughput is also influenced by overall performance. For instance, VT-Tier-Part has the lowest average throughput among all vertical baselines, even though it incurs minimal write stalls with partial compaction. 
(3) Baselines under the horizontal scheme demonstrate higher average throughput than their vertical scheme counterparts with the same merge policy, owing to the optimal read-write trade-off of the horizontal scheme. For example, the average throughput of HR-Level is notably better than VT-Level-Full. Nevertheless, because these horizontal methods rely on full compaction, they suffer from reduced worst-case throughput due to write stalls.
(4) {\ourmethod} excels in both average and worst-case throughput by employing a hybrid layout that delivers strong read and write performance while alleviating write stalls. This design enables {\ourmethod} to achieve strong average throughput and also maintain the best worst-case throughput among all methods.

\vspace{0.5mm}
\noindent\underline{\textit{Results on write-heavy workload.}}
(1) All tiering-based methods exhibit desired overall performance, with HR-Tier achieving the highest average throughput and VT-Tier-Partical achieving the best worst-case throughput. In contrast, all leveling-based methods struggle in this read-heavy scenario.
(2) Similar to our observations under the read-heavy workload, both VT-Tier-Full and HR-Tier experience lower worst-case throughput compared to partial or hybrid designs like VT-Tier-Part and {\ourmethod}, due to write stalls caused by full compaction. However, the relative decline in worst-case throughput is less pronounced than that observed among leveling-based methods under the read-heavy workload. This is because, under the tiering merge policy, compaction data is not merged with existing data in the next level, resulting in smaller compacted run sizes compared to leveling and thus reduced write stalls.
(3) Although VT-Tier-Part achieves the best worst-case throughput among tiering-based methods, it suffers from significantly lower average throughput—only 60\% of HR-Tier's average throughput. This degradation is attributed to partial compaction under tiering, which introduces higher write amplification and increased read amplification, as runs at each level are not cleared after compaction and continue to accumulate.
(4) Most leveling-based baselines perform poorly overall. Among these, HR-Level shows relatively better performance due to the superior read-write trade-off of the horizontal scheme.
(5) Meanwhile, {\ourmethod} achieves satisfactory performance in both average throughput and worst-case throughput on write-heavy workload, through self-tuning to a tiering-based design similar to VRN-Tier.


\vspace{0.5mm}
\noindent\underline{\textit{Results on balanced workload.}} Under the balanced workload, {\ourmethod} achieves the highest average throughput by automatically tuning to a design that optimally suits balanced read-write demands. Simultaneously, {\ourmethod} maintains satisfactory worst-case throughput. Most of the baselines exhibit markedly lower average throughput, except for HR-Full, which attains average throughput close to the best. This performance is attributed to the optimal read-write trade-off provided by the horizontal scheme and its settings being well-positioned on the trade-off curve for the balanced workload.
VT-Level-Part and VT-Tier-Part achieve the best worst-case throughput due to their use of partial compaction. However, this also leads to their average throughput being relatively poor among the baselines. Meanwhile, the worst-case throughput of several baselines employing the tiering merge policy is unsatisfactory. Finally, although VRT-Level and VRT-Tier adopt the same hybrid layout as {\ourmethod}, their fixed settings prioritize either read or write performance at the expense of the other, resulting in mediocre performance under the balanced workload.

\vspace{0.5mm}
\noindent\underline{\textit{Results on range scan workload.}} 
The performance trends are similar to those observed under the read-heavy workload, because a range lookup is sustaintially more costly than a point lookup. 
Notably, we observe a more substantial variation in worst-case throughput across methods, as the heavier updates in this workload further exacerbates performance variability caused by write stalls.

\vspace{0.5mm}
\noindent\underline{\textit{Assessing advanced variations of RocksDB.}} 
From Figures~\ref{exp: throughput}(a) and (b), Universal consistently underperforms under all settings due to its simplistic trigger condition for universal compactions, which limits algorithmic efficiency. RocksDB-tuned typically surpasses Vertical-Level-Part (the default RocksDB configuration), especially under read-heavy workloads. However, its average throughput still struggles in update-intensive scenarios, as it is essentially based on the vertical scheme and the leveling merge policy.

    

\vspace{0.5mm}
\noindent\underline{\textit{Comparison of space amplification.}} Figure~\ref{exp: throughput}~(c) presents the space amplification under the balanced workload with uniform distribution. Among all the baselines, the two full-compaction methods, VT-Level-Full and HR-Level, entail the highest space amplification. In contrast, VT-Tier-Full and HR-Tier, which also use full compaction but with the tiering policy, have a lower space amplification. This reduction occurs because, under the tiering merge policy, compactions to a level do not merge with existing data, which would temporarily duplicate during the merge sort process. Compared methods with full compaction, the methods with partial compaction often entail much lower space amplification, as exemplified by VT-Level-Part and VT-Tier. 
Meanwhile, {\ourmethod} has a quite desirable space amplification, as it adopts the vertical scheme with partial compaction in the largest levels.


\vspace{0.5mm}
\noindent\underline{\textit{Overall performance ranking.}}
Figure~\ref{exp: throughput}~(d) presents the ranking of each method across various performance metrics, including average and worst-case throughput across all uniform workloads, as well as space amplification. 
Overall, the results demonstrate that {\ourmethod} consistently delivers strong performance across all metrics,  whereas each baseline falls short in at least one dimension of performance.

\begin{figure*}[t]
    \centering
    \vspace{0mm}
    \begin{minipage}[h]{.81\linewidth}
    \hspace{5mm}\includegraphics[width=0.95\linewidth]{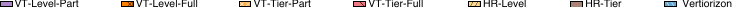}
    \includegraphics[width=1.00\linewidth]{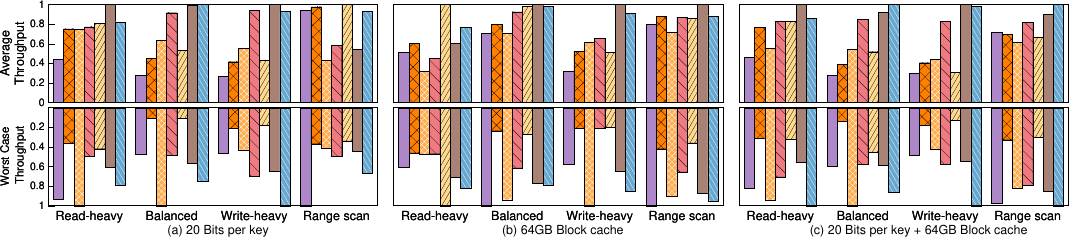}
    \includegraphics[width=0.995\linewidth]{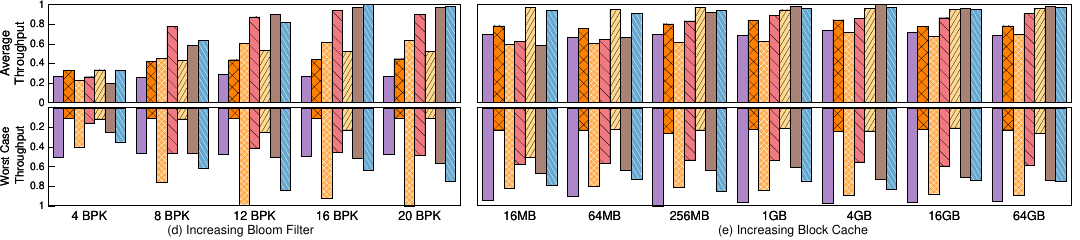}
    \end{minipage}
\hspace{-2.5mm}
\begin{minipage}[h]{.19\linewidth}
\vspace{-2mm}
\captionsetup{singlelinecheck = false, justification=raggedright, margin={0mm, 1.5mm}, labelfont={tiny}, textfont={tiny},}
\captionof{table}{Rankings of each method across different metrics for each case in Fig.~\ref{fig:large memory exp} (rounded to nearest tenth).}\label{tab:large}
\vspace{-3mm}
\scalebox{0.48}{
\addtolength{\tabcolsep}{-3.4pt}
\renewcommand\arraystretch{1.39}
\sffamily
\begin{tabular}{|cc|cccc|cc|c|}
\hline
\multicolumn{2}{|l|}{} & \multicolumn{4}{c|}{Vertical} & \multicolumn{2}{c|}{Horizon} &  \\ \cline{3-8}
\multicolumn{2}{|c|}{\multirow{-2}{*}{Metrics}} & LP & LF & TP & TF & \ \ L\ \ \  & T & \multirow{-2}{*}{VRN} \\ \hline
\multicolumn{1}{|c|}{} & (a) & \cellcolor[HTML]{FD9A52}6.0  & \cellcolor[HTML]{FCE638}4.5  & \cellcolor[HTML]{FDB44A}5.5  & \cellcolor[HTML]{E3FF41}3.5  & \cellcolor[HTML]{D9FF48}3.3  & \cellcolor[HTML]{B2FE65}2.5  & \cellcolor[HTML]{99FE76}2.0 \\ \cline{2-2}
\multicolumn{1}{|c|}{} & (b) & \cellcolor[HTML]{FD9A52}6.0  & \cellcolor[HTML]{FCFF2F}4.0  & \cellcolor[HTML]{FD8B58}6.3  & \cellcolor[HTML]{FCFF2F}4.0  & \cellcolor[HTML]{E3FF41}3.5  & \cellcolor[HTML]{8FFE7D}1.8  & \cellcolor[HTML]{8FFE7D}1.8 \\ \cline{2-2}
\multicolumn{1}{|c|}{} & (c) & \cellcolor[HTML]{FD8B58}6.3  & \cellcolor[HTML]{FCBE46}5.3  & \cellcolor[HTML]{FCBE46}5.3  & \cellcolor[HTML]{D9FF48}3.3  & \cellcolor[HTML]{FCCD41}5.0  & \cellcolor[HTML]{80FD88}1.5  & \cellcolor[HTML]{80FD88}1.5 \\ \cline{2-2}
\multicolumn{1}{|c|}{} & (d) & \cellcolor[HTML]{FD8659}6.4  & \cellcolor[HTML]{FCC344}5.2  & \cellcolor[HTML]{FCEB36}4.4  & \cellcolor[HTML]{C0FE5A}2.8  & \cellcolor[HTML]{FCF533}4.2  & \cellcolor[HTML]{CAFE53}3.0  & \cellcolor[HTML]{8FFE7D}1.8 \\ \cline{2-2}
\multicolumn{1}{|c|}{\multirow{-5}{*}{\rotatebox[origin=c]{90}{Average}}} & (e) & \cellcolor[HTML]{FDA94D}5.7  & \cellcolor[HTML]{FCEB36}4.4  & \cellcolor[HTML]{FD6D62}6.9  & \cellcolor[HTML]{FCEB36}4.4  & \cellcolor[HTML]{9EFE73}2.1  & \cellcolor[HTML]{BBFE5D}2.7  & \cellcolor[HTML]{9EFE73}2.1 \\ \cline{1-2}
\multicolumn{1}{|c|}{} & (a) & \cellcolor[HTML]{C0FE5A}2.8  & \cellcolor[HTML]{FD815B}6.5  & \cellcolor[HTML]{CAFE53}3.0  & \cellcolor[HTML]{E3FF41}3.5  & \cellcolor[HTML]{FD8B58}6.3  & \cellcolor[HTML]{FCF034}4.3  & \cellcolor[HTML]{8FFE7D}1.8 \\ \cline{2-2}
\multicolumn{1}{|c|}{} & (b) & \cellcolor[HTML]{B2FE65}2.5  & \cellcolor[HTML]{FD8B58}6.3  & \cellcolor[HTML]{C0FE5A}2.8  & \cellcolor[HTML]{FDB44A}5.5  & \cellcolor[HTML]{FCCD41}5.0  & \cellcolor[HTML]{E3FF41}3.5  & \cellcolor[HTML]{A8FE6C}2.3 \\ \cline{2-2}
\multicolumn{1}{|c|}{} & (c) & \cellcolor[HTML]{A8FE6C}2.3  & \cellcolor[HTML]{FD815B}6.5  & \cellcolor[HTML]{CAFE53}3.0  & \cellcolor[HTML]{FCFF2F}4.0  & \cellcolor[HTML]{FD815B}6.5  & \cellcolor[HTML]{E3FF41}3.5  & \cellcolor[HTML]{80FD88}1.5 \\ \cline{2-2}
\multicolumn{1}{|c|}{} & (d) & \cellcolor[HTML]{F2FF36}3.8  & \cellcolor[HTML]{FD6864}7.0  & \cellcolor[HTML]{71FD93}1.2  & \cellcolor[HTML]{FCE13A}4.6  & \cellcolor[HTML]{FD9A52}6.0  & \cellcolor[HTML]{DEFF44}3.4  & \cellcolor[HTML]{A3FE6F}2.2 \\ \cline{2-2}
\multicolumn{1}{|c|}{\multirow{-5}{*}{\rotatebox[origin=c]{90}{Worst Case}}} & (e) & \cellcolor[HTML]{67FD9A}1.0  & \cellcolor[HTML]{FD8659}6.4  & \cellcolor[HTML]{9EFE73}2.1  & \cellcolor[HTML]{FCCD41}5.0  & \cellcolor[HTML]{FD7C5D}6.6  & \cellcolor[HTML]{FCFF2F}4.0  & \cellcolor[HTML]{C5FE56}2.9 \\ \hline
\multicolumn{2}{|c|}{Avg.} & \cellcolor[HTML]{FCF034}4.3  & \cellcolor[HTML]{FDAE4B}5.6  & \cellcolor[HTML]{FCFA31}4.1  & \cellcolor[HTML]{F7FF33}3.9  & \cellcolor[HTML]{FCD23F}4.9  & \cellcolor[HTML]{CAFE53}3.0  & \cellcolor[HTML]{99FE76}2.0 \\ \hline
\end{tabular}
}
\end{minipage}
    \vspace{-3mm}
    \caption{The performance with varied bits per key and block cache size.}
    \vspace{0mm}
    \label{fig:large memory exp}
\end{figure*}

\vspace{1mm}
\noindent
\textbf{The impact of large block cache and Bloom filters.}
Figure~\ref{fig:large memory exp} presents the throughput of each approach with large Bloom filters (up to 20 bits per key) and block cache (up to 64GB). The workloads for (d) and (e) follow a balanced distribution, consisting of 50\% updates and 50\% point lookups, uniformly distributed. Aside from these varying parameters, all other settings follow the settings in Figure~\ref{exp: throughput}. To examine whether {\ourmethod} maintains its superiority under large Bloom filters and block cache, we present the corresponding performance rankings in Table 3, aligned with each subfigure in Figure~\ref{fig:large memory exp}. 

\hypertarget{pg: cache and filter}{
Several key observations emerge when comparing these rankings with Figure~\ref{exp: throughput}. {\bf First}, {\ourmethod} remains highly competitive across all rankings and consistently achieves the highest overall performance. While our approach may not
always be the best for a specific workload, this overall ranking
highlights its {\it robustness} across various scenarios. In contrast, every other method exhibits at least one scenario where its effectiveness diminishes. 
{\bf Second}, when ample memory resources are available, all approaches are expected to achieve better performance, resulting in closer performance across them. As shown in Figure~\ref{fig:large memory exp} (d), a higher bits-per-key setting reduces read costs, leading to increased throughput. Similarly, Figure~\ref{fig:large memory exp} (e) shows a slight performance improvement when the block cache size is large. This improvement is primarily driven by a higher cache hit rate. However, since frequent compactions tend to invalidate the block cache, the benefit remains limited even with a larger cache size. {\bf Third}, since large Bloom filters and block caches generally enhance point-lookup performance, methods that prioritize write optimization (e.g., tiering methods) tend to improve their rankings. Notably, HR-Tier now ranks second in average throughput, making it more comparable to {\ourmethod}. However, it is important to note that HR-Tier is also a novel approach introduced in this paper. 
}

\begin{figure}[t]
    \centering
    \vspace{0mm}
    \begin{minipage}[h]{1.1\linewidth}
     \hspace{7mm}
    \begin{overpic}[width=0.4\linewidth]{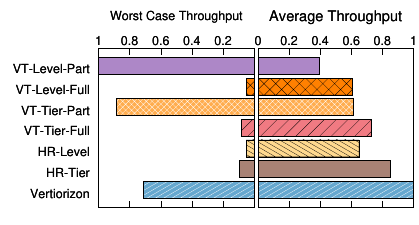}\put(30, -2.2){\color{black}{\small{(a) Moderate Memory Setting}}}
    \end{overpic}
    \hspace{-12mm}
     \begin{overpic}[width=0.4\linewidth]{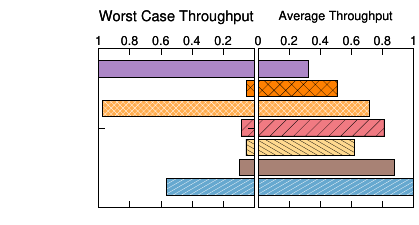}\put(33, -2.2){\color{black}{\small(b) Large Memory Setting}}
    \end{overpic}
    \end{minipage}
    \vspace{0mm}
    \captionsetup{font=small}
    \caption{Evaluation of {\ourmethod} under very large data scale.} 
    \label{fig: large scale exp}
    \vspace{0mm}
\end{figure}

\vspace{1mm}
\noindent\textbf{Performance under larger datasets.}
In Figure~\ref{fig: large scale exp}, we compare our method with the baselines under significantly larger memory buffer size and data size, aiming to offer a comparative analysis that reflects scenarios of the big data era. As such large-scale experiments have high demand on disk space, we use an alternate server with a larger disk capacity, specifically featuring a 13th Gen Intel(R) Core(TM) i9-13900K CPU @ 4.0GHz processor, 128GB DDR4 main memory, and a 2TB NVMe SSD. In this experiment, most configuration settings align with precedent experiments in Figure~\ref{exp: throughput}. Except for that for all methods, the buffer size is increased to 64 MB, and the total preloaded data is increased to 500 GB, corresponding to 500,000,000 key-value entries.   We then run a workload of 200,000,000 operations to evaluate each method’s performance. Because loading such a large amount of data is time-consuming, we select a relatively general workload setting to conduct this experiment: the read-write balanced workload (50\% updates and 50\% point lookups) with a uniform key distribution. We use metrics similar in Figure~\ref{exp: throughput}~(a), including average throughput and worst-case throughput during workload processing. Figure~\ref{fig: large scale exp}~(a) uses a moderate memory configuration with 32MB block cache and 10 bits per key Bloom filter, while (b) employs a larger memory configuration with 64GB block cache and 20 bits per key Bloom filter. 

We observe that when the data scale becomes very large, {\ourmethod} excels in average throughput. In particular, its average throughput surpasses that of baselines such as HR-Level and HR-Tier, which had previously performed well under smaller-scale settings. The underlying reason is that these baselines rely exclusively on full compaction, leading to drastically severe write stalls at very large data scales and causing their worst-case throughput to deteriorate significantly. In contrast, by employing a hybrid growth scheme, {\ourmethod} maintains a satisfactory worst-case throughput, trailing behind methods that fully adopt partial compaction (e.g., VT-Level-Part and VT-Level-Full). Nevertheless, because partial compaction is overall less efficient, these methods post only mediocre results in terms of average throughput.

\begin{figure}[t]
\centering
\vspace{0mm}
\hspace{-4mm}
\begin{minipage}[h]{0.35\linewidth}
\vspace{-0.5mm}
\begin{overpic}[width=1\linewidth]{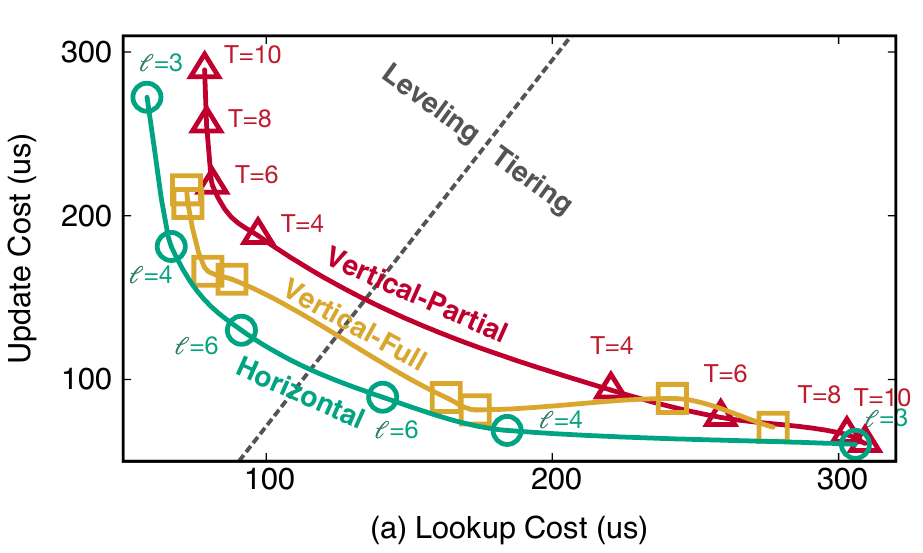}\put(78, 51){}
\end{overpic}
\end{minipage}
\hspace{-2mm}
\begin{minipage}[h]{0.65\linewidth}
\vspace{-1mm}
\includegraphics[width=1.05\linewidth]{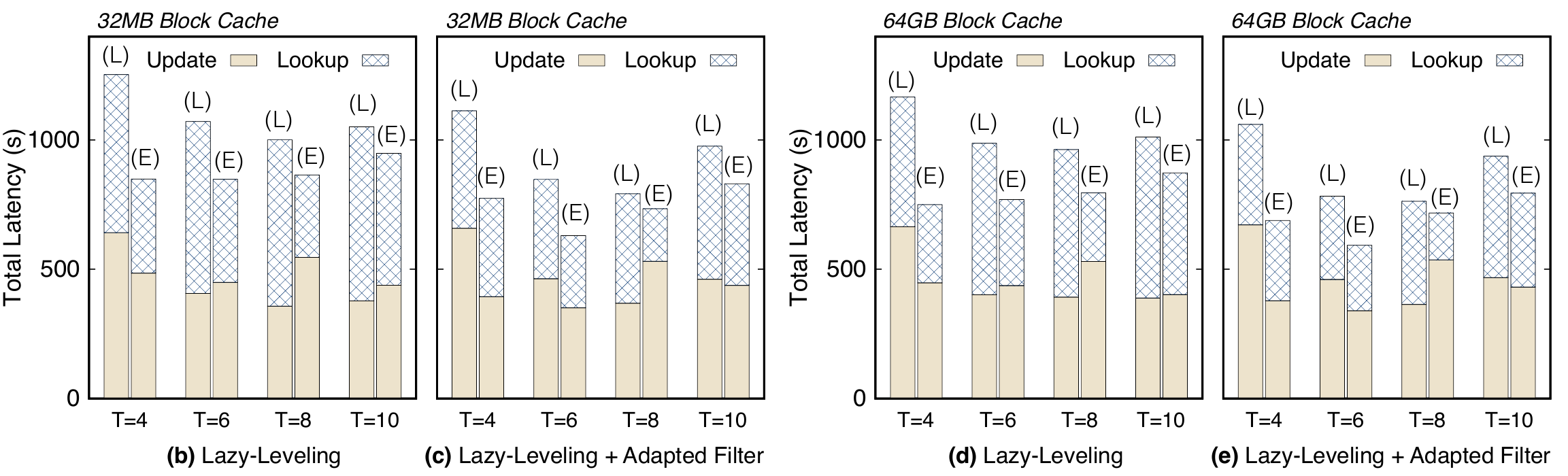}
\end{minipage}
\vspace{-1.5mm}
\captionsetup{font=small}
\caption{(a) compares the trade-off between vertical scheme and horizontal scheme. (b)~$\sim$~~(e) shows that {\ourmethod} can be applied to existing designs to enhance their performance. ("L" means the lazy-leveling design and "E" means lazy-leveling embedded with {\ourmethod}.) }
\vspace{1.5mm}
\label{exp: tradeoff and embed}
\end{figure}

\vspace{1mm}
\noindent\textbf{From horizontal-leveling to horizontal-tiering: enhanced read-write trade-off curve.} 
In Figure~\ref{exp: tradeoff and embed}~(a), we compare the read-write trade-off curves of the vertical and horizontal schemes. In the vertical scheme, the trade-off between read and write performance is controlled by adjusting the size ratio \( T \) and the merge policy. The horizontal scheme's trade-off, conversely, is regulated by varying the number of levels \( \ell \) and the merge policy, and is enhanced by our horizontal-tiering algorithm under tiering merge, To evaluate these schemes comprehensively, we assess ten different designs under each compaction scheme with varying settings.
Specifically, for the vertical scheme, we test both partial compaction and full compaction designs with adjacency ratios \( T = 4, 6, 8, 10 \) under both leveling and tiering merge policies. Similarly, for the horizontal scheme, we evaluate designs with the number of levels \( \ell = 3, 4, 6\) under both merge policies. We evaluate them on a workload consisting of 20 million operations with equal amounts of updates and point lookups.
For each design, we plot a point on Figure~\ref{exp: tradeoff and embed}~(a) based on its per-lookup and per-update latency, where the horizontal axis represents the read cost and the vertical axis represents the write cost.\footnote{{The figure shows that all methods achieve relatively low latencies in both lookups and updates, on the order of hundreds of microseconds. This is primarily attributed to the inherent efficiency of the LSM-tree data structure and the high data transfer bandwidth of the SSDs used in our experimental setup. Notably, previous studies~\cite{huynh2021endure, dostoevsky2018} have also reported that, under similar experimental conditions, RocksDB demonstrates operation latencies within the hundred-microsecond range.}} By connecting all the points corresponding to each compaction scheme, we obtain the read-write trade-off curve for that scheme in practical LSM-tree implementations.

The result shows that our proposed horizontal-tiering scheme significantly extends the trade-off curve of the original horizontal scheme, thereby completely enveloping that of both vertical-partial and vertical-full designs. The enhanced horizontal scheme's trade-off dominates the vertical scheme's, with the horizontal scheme's curve consistently closer to the origin, indicating lower overall costs. 
This aligns with our earlier theoretical analysis, as the compaction sequence under the horizontal scheme is always optimal. 
The trade-off curve presented for the horizontal scheme effectively represents the Pareto frontier for LSM-trees in terms of compaction schemes.

\vspace{1mm}
\noindent\textbf{{\ourmethod} enhances existing designs as a fundamental infrastructure.}
We embedded {\ourmethod} into various lazy-leveling designs and compared their performance with the original configurations. Each lazy-leveling design employed different size ratios \( T \), specifically scaling across the values 4, 6, 8, and 10. We evaluated these designs using a balanced workload comprising 4 million operations, with an equal split of 50\% updates and 50\% point lookups.
Figure~\ref{exp: tradeoff and embed}~(b) and~(d) present the total latency for each lazy-leveling design, as well as the individual read and write latencies under 32MB and 64GB block cache configurations, respectively. In this figure, the upper portion of each bar represents the read latency, while the lower portion corresponds to the write latency. The results demonstrate that embedding {\ourmethod} into any lazy-leveling design consistently improves performance. This enhancement is primarily reflected in the reduction of read latency. The improvement arises because our embedding replaces the original tiering levels with the horizontal part of {\ourmethod} utilizing the horizontal-tiering scheme, all while maintaining the same number of levels. This approach effectively reduces the average read amplification during runtime without increasing the write amplification.
Furthermore, we also tested lazy-leveling designs with adapted filter layout~\cite{dayan2017monkey}. As shown in Figure~\ref{exp: tradeoff and embed}~(c) and~(e), the results indicate that after incorporating Bloom filter optimizations, embedding {\ourmethod} as a backbone continues to enhance the performance of lazy-leveling designs. 

\section{Conclusion}

In this paper, we revisit two established growth schemes for LSM-trees: the horizontal scheme and the vertical scheme. Through a comprehensive analysis of their respective strengths and limitations, we revoke the recognition of the horizontal scheme and present an enhanced horizontal scheme by extending its applicability to accommodate the tiering merge policy. Building on this foundation, we propose {\ourmethod}, a self-designing growth scheme that leverages the individual strengths of both horizontal and vertical schemes. 
We demonstrate the practical benefits of the enhanced horizontal scheme and {\ourmethod} by embedding these schemes into RocksDB and conducting extensive experiments. Notably, {\ourmethod} achieves robust performance across a wide range of scenarios and metrics. These results highlight the potential of {\ourmethod} to serve as a foundational infrastructure for improving general LSM-tree-based systems.


\received{October 2024}
\received[revised]{January 2025}
\received[accepted]{February 2025}

\bibliographystyle{ACM-Reference-Format}
\bibliography{sample-base}

\vspace{10mm}
\section{Appendix}\label{sec:proofs}
\subsection{Proof of Lemma~\ref{lemma: binomial volume}}

\begin{proof}
We prove the lemma by induction on the number of levels, \(\ell\).

\paragraph{Base Case (\(\ell = 1\)).}  
When there is only one level, the compaction counter decreases by one with each buffer flush. Thus, starting from an initial counter value of \(k\), exactly \(k\) buffer flushes will reduce the counter to zero. Moreover, note that
\[
\binom{k+1-1}{1} = \binom{k}{1} = k.
\]
Hence, the lemma holds when \(\ell=1\).

\paragraph{Inductive Step.}  
Assume that for some \(\ell' \ge 1\) and for all integers \(k\), the lemma holds; that is, the total number of buffer flushes required to reduce the compaction counters to zero is 
\[
\binom{k+\ell'-1}{\ell'}.
\]
We now consider the case \(\ell = \ell' + 1\).

For \(\ell = \ell' + 1\) and any \(k\), observe that a compaction from Level \(\ell'\) to Level \(\ell'+1\) is triggered whenever the compaction counter of Level \(\ell'\) reaches zero. By the inductive hypothesis, the first such compaction occurs after 
\[
\binom{k+\ell'-1}{\ell'}
\]
buffer flushes. According to lines 12–13 of Algorithm 2, after a compaction, the compaction counters of all levels are reset to \(k-1\). Applying the same reasoning, the second compaction (which decrements the counter at Level \(\ell'+1\) from \(k-1\) to \(k-2\)) occurs after an additional 
\[
\binom{(k-1)+\ell'-1}{\ell'}
\]
buffer flushes. Iterating this argument, the \(i\)th compaction from Level \(\ell'\) to Level \(\ell'+1\) occurs after an additional 
\[
\binom{(k-i+1)+\ell'-1}{\ell'}
\]
buffer flushes. When the \(k\)-th compaction occurs, the compaction counter at Level \(\ell'+1\) reaches zero.

Thus, the total number of buffer flushes required is
\[
\sum_{i=1}^{k} \binom{(k-i+1)+\ell'-1}{\ell'} = \sum_{i=1}^{k} \binom{i+\ell'-1}{\ell'}.
\]
A well-known corollary of Pascal’s rule states that
\[
\sum_{i=1}^{k} \binom{i+\ell'-1}{\ell'} = \binom{k+\ell'}{\ell'+1}.
\]
This completes the inductive step, showing that the lemma holds for \(\ell = \ell'+1\).

By induction, the lemma holds for all \(\ell \ge 1\).
\end{proof}

\subsection{Complete Proof of Theorem~\ref{lemma: horizontal tiering optimality}}

To prove Theorem~\ref{lemma: horizontal tiering optimality}, we follow the sketch outlined in the main text. First, we show that the original problem \(\psi(n,\ell)\) may be decomposed into two subproblems—one corresponding to the writes before the first compaction into the highest level and the other to those afterward—by invoking Lemmas~\ref{lemma: level 1} and~\ref{lemma: uniform sub tiering}. Next, Lemma~\ref{lemma: best i tiering} establishes the precise timing of the first compaction to the highest level, denoted \(p_f^*\), in any optimal compaction sequence. By iteratively applying Lemmas~\ref{lemma: uniform sub tiering} and~\ref{lemma: best i tiering}, we recursively decompose the original problem into smaller subproblems. Finally, we demonstrate that the compaction sequence produced by Algorithm~\ref{alg: horizontal tiering} is identical to the optimal sequence for \(\psi(n,\ell)\).

\begin{lemma}
    In the optimal compaction sequence, all compactions start from Level 1.
    \label{lemma: level 1}
\end{lemma}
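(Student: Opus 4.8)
\textbf{Proof plan for Lemma~\ref{lemma: level 1}.}

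The plan is to argue by an exchange argument: given any optimal compaction sequence that contains a compaction not starting from Level~1, I will show it can be transformed into another valid sequence, of no greater read cost, in which that compaction is replaced by one starting from Level~1 (or by a sequence of such compactions). Repeating the transformation finitely many times yields an optimal sequence all of whose compactions start from Level~1.

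First I would fix notation. Recall from Problem~\ref{problem: uniform tiering} that a compaction is a triple $p = (I, l_1, l_2)$: after the $I$-th buffer flush, all runs in Levels $l_1, l_1+1, \dots, l_2-1$ are merged into a single new run at Level~$l_2$. The total read cost is $\sum_t (\text{number of runs present at lookup time }t)$ weighted by the number of lookups; since lookups are uniform in time, this is proportional to $\sum_{\text{runs }R} (\text{lifespan of }R)$, i.e. the total ``run-time'' accumulated over all runs that ever exist, each run contributing its lifespan (the interval between its creation and its destruction by a later compaction, or the end of the workload). So minimizing read cost is equivalent to minimizing total accumulated run-lifespan.

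The key step is the exchange. Consider an optimal sequence $S^*$ and let $p = (I, l_1, l_2)$ with $l_1 > 1$ be a compaction in $S^*$ with $l_1$ as small as possible among all such ``bad'' compactions, and among those, $I$ as small as possible. Just before $p$ fires, Levels $1, \dots, l_1-1$ each hold some runs. I claim we may instead, at the same moment $I$, perform the compaction $p' = (I, 1, l_2)$: this merges Levels $1, \dots, l_2-1$ into Level~$l_2$. The new run created at Level~$l_2$ is sound (it contains a superset of the data $p$ would have placed there together with whatever was in Levels $1,\dots,l_1-1$). The effect on run-lifespans is: (i)~the runs in Levels $1,\dots,l_1-1$, which under $S^*$ would have survived past time $I$ and been compacted only later (incurring additional lifespan), are now destroyed at time $I$ — this \emph{decreases} total lifespan; (ii)~the single new run at Level~$l_2$ is the same in both cases (same creation time $I$, and the subsequent compaction schedule above Level $l_2$ is unchanged), so Level~$l_2$ and above contribute identically; (iii)~no new runs are created that were not created before. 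The only subtlety is that under $S^*$ the data from Levels $1,\dots,l_1-1$ would later have been flushed into Level~$l_2$ anyway via some future compaction(s); removing it now only means those future compactions carry less data, which does not increase the number or lifespan of any run — indeed it can only help, since with the tiering policy each buffer flush still creates exactly one run at Level~1 regardless. Hence the total accumulated run-lifespan does not increase, so $p'$ yields a sequence at least as good as $S^*$, and it has one fewer bad compaction at level $l_1$ (or has replaced it cleanly). Iterating, we reach an optimal sequence with no bad compactions.

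I expect the main obstacle to be making step~(iii) fully rigorous: one must verify that rerouting the early-level data into Level~$l_2$ at time $I$ rather than at some later time never forces an \emph{extra} run to be created somewhere, nor lengthens the lifespan of any surviving run at Levels $l_1, \dots, l_2-1$ (these intermediate levels are now empty right after $p'$, which is exactly what Algorithm~\ref{alg: horizontal tiering} maintains). The cleanest way to handle this is to track, level by level, the multiset of (creation-time, destruction-time) intervals of all runs under $S^*$ versus under the modified sequence, and show the modified multiset is dominated componentwise; a short induction on the number of levels above $l_1$ should close this, reusing the decomposition idea already invoked in the proof sketch of Theorem~\ref{lemma: horizontal tiering optimality}.
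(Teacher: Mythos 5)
Your proposal is correct and follows essentially the same exchange argument as the paper: replace any compaction $(I,l_1,l_2)$ with $l_1>1$ by $(I,1,l_2)$, observe that the runs in Levels $1,\dots,l_1-1$ are destroyed earlier (reducing accumulated run-lifespan, hence read cost) while the run created at Level $l_2$ and everything above it are unaffected, and iterate. In fact the subtlety you flag in step~(iii) is exactly the point the paper's proof asserts without elaboration (``it does not create extra runs nor does it extend the duration that any run remains in the system''), so your plan is, if anything, slightly more careful than the published argument.
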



\begin{proof}
Let \(S = \{p_1, p_2, \ldots\}\) be an optimal compaction sequence. Consider any compaction \(p_k = (I_k, l_1^k, l_2^k)\) in \(S\) with \(l_1^k > 1\). We construct an alternative compaction
\[
p_k^* = (I_k, 1, l_2^k),
\]
which is identical to \(p_k\) except that it starts from Level 1. Performing \(p_k^*\) instead of \(p_k\) shortens the lifetime of all runs in Levels \(1\) through \(l_1^k-1\) at the time of the compaction, thereby reducing the overall read cost incurred during subsequent lookups. Importantly, this modification does not introduce any additional overhead—for example, it does not create extra runs nor does it extend the duration that any run remains in the system. Consequently, replacing any compaction starting above Level 1 with one starting at Level 1 strictly lowers the total read cost, implying that every compaction in an optimal sequence must originate from Level 1.
\end{proof}

\begin{lemma}
Let $\tau(n, \ell)$ denote the total read cost of the optimal compaction sequence for problem $\psi(n,\ell)$. Let $r$ denote the number of lookups that occur between each buffer flush, we have~\footnote{In our analysis, for simplicity, we assume that there is no Bloom filter, meaning each lookup incurs one I/O for probing each run. However, this lemma can easily be extended to incorporate Bloom filters by multiplying the read cost in any given case by the corresponding false positive rate when using Bloom filters.}
\begin{equation}
\tau(n,\ell)=
\begin{cases}
0, & n=1,\\[1mm]
\displaystyle \binom{n}{2}\cdot r, & \ell=1,\; n>1,\\[1mm]
\displaystyle \min_{1\le i\le n-1} \Big\{\, \tau(i,\ell-1) + (n-i)\cdot r + \tau(n-i,\ell) \,\Big\}, & \text{otherwise}.
\end{cases}
\label{equation: uniform sub tiering}
\end{equation}
\label{lemma: uniform sub tiering}
\end{lemma}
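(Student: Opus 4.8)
The plan is to prove the recursion by handling the two base cases directly and then, for the general case, establishing matching upper and lower bounds via the ``first compaction into Level $\ell$'' decomposition already sketched for Theorem~\ref{lemma: horizontal tiering optimality}. For $n=1$ there is no inter-flush interval and no run has accumulated, so $\tau(1,\ell)=0$. For $\ell=1$ and $n>1$, no compaction is possible (a compaction requires $l_1<l_2\le\ell$), so all flushes simply accumulate as runs in Level~1; during the $j$-th of the $n-1$ intervals there are exactly $j$ runs, each lookup pays one I/O per run, and summing $\sum_{j=1}^{n-1} j\cdot r$ gives $\binom{n}{2}\cdot r$.

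For the general case ($\ell\ge 2$, $n\ge 2$) the structural observation is that the first compaction reaching Level $\ell$ — which by Lemma~\ref{lemma: level 1} has the form $p_f^*=(i,1,\ell)$ — empties Levels $1,\dots,\ell-1$ and deposits into Level $\ell$ a single run that is thereafter never merged (tiering runs are independent and Level $\ell$ is terminal). Since lookup cost is additive over runs, the total read cost splits into three disjoint, exhaustive groups: (a) the activity within Levels $1,\dots,\ell-1$ before flush $i$, contributing only during intervals $1,\dots,i-1$; (b) the persistent Level-$\ell$ run from $p_f^*$, present during intervals $i,\dots,n-1$ and hence contributing $(n-i)\cdot r$; and (c) the activity after flush $i$, which starts from Levels $1,\dots,\ell-1$ empty and contributes only during intervals $i+1,\dots,n-1$. \textbf{Upper bound:} for any $1\le i\le n-1$, concatenate an optimal sequence for $\psi(i,\ell-1)$, the compaction $(i,1,\ell)$, and an optimal sequence for $\psi(n-i,\ell)$; by the three-way split its cost is exactly $\tau(i,\ell-1)+(n-i)r+\tau(n-i,\ell)$, so $\tau(n,\ell)\le\min_{1\le i\le n-1}\{\tau(i,\ell-1)+(n-i)r+\tau(n-i,\ell)\}$. \textbf{Lower bound:} take an optimal $S^*$. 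If $S^*$ reaches Level $\ell$, let $i$ be the time of its first such compaction; a cut-and-paste argument shows its prefix (before flush $i$) and suffix (after flush $i$) must be optimal for $\psi(i,\ell-1)$ and $\psi(n-i,\ell)$ respectively — otherwise substituting the optimum lowers the cost in exactly the interval ranges those parts govern, without altering $p_f^*$ or the configuration entering interval $i$ — so $\mathrm{cost}(S^*)=\tau(i,\ell-1)+(n-i)r+\tau(n-i,\ell)\ge\min_i\{\ldots\}$. If $S^*$ never reaches Level $\ell$, then deleting flush $n$ and any compactions in its interval yields a feasible sequence for $\psi(n-1,\ell-1)$ whose cost equals that of $S^*$ restricted to intervals $1,\dots,n-2$; since interval $n-1$ alone costs at least $r$, we get $\mathrm{cost}(S^*)\ge\tau(n-1,\ell-1)+r=\tau(n-1,\ell-1)+r+\tau(1,\ell)$ (using $\tau(1,\ell)=0$), which is the bracketed expression at $i=n-1$ and hence $\ge\min_i\{\ldots\}$. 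The two bounds coincide, giving the recursion; the Bloom-filter variant follows by scaling every per-run lookup cost by the false-positive rate.

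The step I expect to be the main obstacle is making the three-way cost split airtight: one must verify that the pre-$i$ runs truly vanish from interval $i$ onward (all absorbed by $p_f^*$), that the post-$i$ runs contribute nothing during interval $i$ or earlier (Levels $1,\dots,\ell-1$ are empty between $p_f^*$ and flush $i+1$), and that the persistent run from $p_f^*$ is neither merged into nor double-counted against the post-$i$ runs in Level $\ell$. Once this accounting is pinned down, the exchange arguments become routine, because the state entering interval $i$ (Levels $1,\dots,\ell-1$ empty, one run in Level $\ell$) is independent of how the prefix was executed, and the post-$i$ evolution is independent of how its own subproblem is solved, so the prefix and suffix can be replaced by their optima independently.
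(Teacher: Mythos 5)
Your proof is correct and takes essentially the same route as the paper's: the base cases by direct run counting, and for $n,\ell\ge 2$ the decomposition at the first compaction into the largest level with the three-term split $\tau(i,\ell-1)+(n-i)\cdot r+\tau(n-i,\ell)$, justified by an exchange argument using the fact that the state after $p_f^*$ (Levels $1,\dots,\ell-1$ empty, one untouched run in Level $\ell$) is independent of the prefix. Your explicit upper/lower-bound bookkeeping and the separate treatment of optimal sequences that never compact into Level $\ell$ are slightly more careful than the paper's version, but the underlying argument is the same.
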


\begin{proof}
We prove the lemma by considering the three cases.

\paragraph{Case 1: \(n=1\).}  
    The zero cost when $n=1$ owes to the fact that the first run in the LSM-tree is just created by buffer flush. Consequently, no previous lookups would have probed any run, resulting in no incurred read cost.

\paragraph{Case 2: \(\ell=1\) and \(n>1\).}  
When there is only one level, between the $i-1$-th and $i$-th buffer flush, there are $i-1$ runs in Level~1, and each lookup costs $(i-1)$ I/Os for probing all runs. Therefore, $(i-1)\cdot r $ I/Os would be incurred during this period.
    Summing these costs over all \(i\) from \(2\) to \(n\) yields
\[
\tau(n,1)=\sum_{i=2}^{n}(i-1)\cdot r = \binom{n}{2}\cdot r.
\]

\paragraph{Case 3: \(\ell>1\) and \(n>1\).} To prove the last case, we will illustrate that problem $\psi(n, \ell)$ can be decomposed into two subproblems. Specifically, suppose that in the optimal compaction sequence $S^* = \{p^*_1, p^*_2, \dots\}$, the first compaction to the largest level $p^*_f$ occurs after the $i$-th buffer flush. We can then partition the original problem into two subproblems. 
    Subproblem $\psi(i, \ell - 1)$: this subproblem addresses the first $i$ buffer flushes with the first $\ell - 1$ levels.
    Subproblem $\psi(n - i, \ell)$: this subproblem involoves the remaining $n - i$ buffer flushes and all $\ell$ levels.
    If the optimal compaction sequences for these subproblems are $S_1$ and $S_2$ respectively, then the optimal solution to the original problem satisfies $S^* = \{S_1, p^*_f, S_2\}$.
    We first explain why the portion of $S^*$ preceding $p^*_l$ must be $S_1$. Before the first compaction to the largest level in $S^*$, the data entering the LSM-tree with the first $i$ buffer flushes remains within the first $\ell - 1$ levels since no compaction to Level $\ell$ occurs before $p^*_f$. This scenario exactly matches problem $\psi(n - i, \ell)$
    because $S^*$ is the optimal compaction sequence that minimizes the total read cost, and the portion of $S^*$ preceding $p^*_f$ must correspond to the optimal solution of $\psi(i, \ell - 1)$. Otherwise, we could replace that portion with the optimal sequence $S_1$, resulting in a lower total read cost, contradiction ensues. 
    Similarly, the portion of $S^*$ following $p^*_f$ must correspond to the optimal solution of $\psi(n - i, \ell)$, since the portion of data already moved into the largest level would not be merged with and affect any further data entering the LSM-tree. If it were not optimal, substituting it with $S_2$ would yield a compaction sequence with a lower read cost.

    Therefore, for the optimal compaction sequence, if the first compaction to the largest level $p^*_f$ occurs after the $i$-th buffer flush, the minimal total cost can be expressed as 
    \[\tau(i, \ell-1) + (n-i)\cdot r + \tau(n-i, \ell).\] Specifically, $\tau(i, \ell-1)$ represents the cost of the preceding subproblem, $\tau(n-i, \ell)$ represents the cost of the subsequent subproblem, and $(n-i)\cdot r$ corresponds to the read cost incurred by the run created by compaction $p^*_f$. Since the optimal compaction sequence yields the minimal read cost among all possible compaction sequences, $\tau(n, \ell)$ is the minimum over all possible values of $i$, given by $\min_{i \in [1, n-1]} \left\{ \tau(i, \ell-1) + (n-i)\cdot r + \tau(n-i, \ell) \right\}$.
\end{proof}

\begin{lemma}
For problem $\tau(n,\ell)$ with $n>1$ and $\ell>1$. Let $m$ be an integer that satisfies ${{m}\choose{\ell}} \leq n \leq {{m+1}\choose{\ell}}$. we have 
\begin{equation}
    \tau(n, \ell) = \tau(i^*, \ell-1)+(n-i^*)\cdot r+\tau(n-i^*, \ell)
    \label{equation: tiering decompose}
\end{equation} if $i$ satisfies
\begin{align}
    &{{m-1}\choose{\ell}}\leq i^*\leq{{m}\choose{\ell}}\\
    &{{m-1}\choose{\ell-1}}\leq n-i^*\leq{{m}\choose{\ell-1}}
\end{align}
\label{lemma: best i tiering}
\end{lemma}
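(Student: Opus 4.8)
The plan is to show that the minimization in the ``otherwise'' branch of the recurrence in Lemma~\ref{lemma: uniform sub tiering} is attained exactly over the binomial window described in \eqref{equation: tiering decompose}. Write $G(i):=\tau(i,\ell-1)+(n-i)\cdot r+\tau(n-i,\ell)$ for $1\le i\le n-1$, so that $\tau(n,\ell)=\min_i G(i)$ by Lemma~\ref{lemma: uniform sub tiering} (Lemma~\ref{lemma: level 1} lets us restrict attention to compactions starting at Level~1). The core claim is that $G$ is discretely convex in $i$ and that its set of minimizers is precisely that window. Both follow once we have a sharp handle on how $\tau(\cdot,\ell)$ grows in its first argument, so the first task is to establish that the forward difference $\Delta_\ell(n):=\tau(n{+}1,\ell)-\tau(n,\ell)$ is a nondecreasing step function of $n$ whose jumps occur exactly at the thresholds $n=\binom{m}{\ell}$ — equivalently $\Delta_\ell(n)=(m-\ell+1)\,r$ where $m$ is the unique index with $\binom{m}{\ell}\le n<\binom{m+1}{\ell}$ — which in particular makes $\tau(\cdot,\ell)$ discretely convex.

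This would be proved by an outer induction on $\ell$ and an inner strong induction on $n$, establishing simultaneously the step-function structure of $\Delta_\ell$ and the location of the minimizer in the recurrence. The base cases are immediate from Lemma~\ref{lemma: uniform sub tiering}: for $\ell=1$, $\tau(n,1)=\binom{n}{2}r$ gives $\Delta_1(n)=n\,r$, and for $n=1$, $\tau(1,\ell)=0$. In the inductive step at $(n,\ell)$, both $\tau(\cdot,\ell-1)$ (at all arguments) and $\tau(n',\ell)$ for $n'<n$ are available with the claimed structure. Rewriting $G(i)=\bigl[\tau(i,\ell-1)-i\,r\bigr]+n\,r+\tau(n-i,\ell)$ exhibits $G$ as a sum of a convex function (convex plus linear) and a convex function of $i$ (the inductively convex $\tau(\cdot,\ell)$, on arguments $<n$, precomposed with $i\mapsto n-i$), so $G$ is convex and $\tau(n,\ell)=\min_i G(i)$ is attained on an interval. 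Computing $G(i{+}1)-G(i)=\Delta_{\ell-1}(i)-r-\Delta_\ell(n-i-1)$ and substituting the step-function forms, this difference equals $\bigl(\mu_1(i)-\mu_2(n-i-1)\bigr)r$, where $\mu_1(i)$ is the index of the level-$(\ell-1)$ binomial band containing $i$ and $\mu_2(j)$ the index of the level-$\ell$ band containing $j$; since $\mu_1$ is nondecreasing and $\mu_2$ is nondecreasing in its argument, this reconfirms convexity and shows the minimizing interval of $G$ is the set of $i$ over which $\mu_1(i)-\mu_2(n-i-1)$ passes from negative to nonnegative.

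The last step is to pin this interval down. Using the hypothesis $\binom{m}{\ell}\le n\le\binom{m+1}{\ell}$ together with Pascal's identity in the forms $\binom{m}{\ell}=\binom{m-1}{\ell}+\binom{m-1}{\ell-1}$ and $\binom{m+1}{\ell}=\binom{m}{\ell}+\binom{m}{\ell-1}$, I would verify that $\mu_1(i)\ge\mu_2(n-i-1)$ first holds exactly as $i$ enters the window of \eqref{equation: tiering decompose}, and that the whole window stays inside the minimizing interval — which simultaneously fixes the value $\tau(n,\ell)$ and confirms that $\Delta_\ell(n)$ jumps at $n=\binom{m}{\ell}$, closing the induction. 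A useful sanity check is that, by Lemma~\ref{lemma: binomial volume}, the split actually taken by Algorithm~\ref{alg: horizontal tiering} lies in this window, so the optimal compaction sequence is realized by that algorithm, which is what Theorem~\ref{lemma: horizontal tiering optimality} ultimately needs.

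The principal difficulty is that the convexity and growth structure of $\tau(\cdot,\ell)$ are self-referential: $\tau(\cdot,\ell)$ occurs on both sides of its own recurrence, so the induction must be arranged so that every quantity used at step $(n,\ell)$ has strictly smaller $n$ or strictly smaller $\ell$, and showing that the jump points of $\Delta_\ell$ land precisely at $n=\binom{m}{\ell}$ is the technical heart of that induction. Granting that, the remaining work is combinatorial bookkeeping — tracking several binomial coefficients at once and checking that the crossover of the band-index functions $\mu_1,\mu_2$ falls inside the asserted window for every $n$ in the whole range $[\binom{m}{\ell},\binom{m+1}{\ell}]$, including the ``off-threshold'' values of $n$ that are not themselves binomial coefficients. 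I expect that verification, rather than the convexity argument, to absorb most of the effort.
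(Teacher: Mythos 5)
Your route is sound and, in substance, it is the same induction the paper uses, just packaged differently. The paper proves this lemma simultaneously with a closed form $\tau(n,\ell)=\bigl[\ell\binom{m}{\ell+1}+(m-\ell+1)\bigl(n-\binom{m}{\ell}\bigr)\bigr]\cdot r$ (Lemma~\ref{lemma: horizontal tiering cost}): it picks one $i^*$ in the window (nonempty by Pascal's rule), substitutes the inductive closed forms into the recurrence of Lemma~\ref{lemma: uniform sub tiering} to obtain that value as an upper bound, and then shows for an arbitrary competitor $j$ (with its own band indices $a,b$) that the split at $j$ costs at least as much, by direct binomial manipulation. Your step-function claim $\Delta_\ell(n)=(m-\ell+1)\,r$ on the band $\binom{m}{\ell}\le n<\binom{m+1}{\ell}$ is exactly equivalent to that closed form, and your marginal identity $G(i+1)-G(i)=\bigl(\mu_1(i)-\mu_2(n-i-1)\bigr)\,r$ checks out, so replacing the paper's two-point comparison by discrete convexity plus a sign analysis of the first difference is a legitimate, arguably cleaner, way to carry out the minimization; the inductive bookkeeping you anticipate is the same work the paper does in its lower-bound computation, and your convexity argument additionally makes it transparent that the whole window, not just one point, attains the minimum.

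One concrete warning before you execute the final step of pinning the window down: the window as printed in this lemma has the two bands swapped, and your crossing analysis will not reproduce it literally. Since $i^*$ is the argument of the $(\ell-1)$-level subproblem and $n-i^*$ that of the $\ell$-level one, the zero set of $\mu_1(i)-\mu_2(n-i-1)$ places $i^*$ in $\bigl[\binom{m-1}{\ell-1},\binom{m}{\ell-1}\bigr]$ and $n-i^*$ in $\bigl[\binom{m-1}{\ell},\binom{m}{\ell}\bigr]$ --- the form stated in the paper's main-text proof sketch and the one actually substituted in the paper's own algebra --- rather than the reversed inequalities displayed here. A small example confirms the printed version cannot be the minimizer set: for $n=3$, $\ell=2$ the recurrence is minimized at $i^*=2$ (cost $2r$ versus $3r$ at $i=1$), while the printed window forces $i^*=1$. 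So verify your crossover against the corrected bands; with that fix your plan goes through, and on the interior of the window you get $\mu_1(i)=\mu_2(n-i-1)=m-1$, i.e., the first difference vanishes there, confirming that every $i^*$ in the (corrected) window achieves $\tau(n,\ell)$.
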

We can prove the following lemma while simultaneously establishing Lemma~\ref{lemma: best i tiering}.

\begin{lemma}
    Let $m$ be an integer that satisfies ${{m}\choose{\ell}} \leq n \leq {{m+1}\choose{\ell}}$. Then $\tau(n, \ell)$ can be represented as
    \begin{equation}
    \tau(n, \ell) = \left[\ell \cdot {{m}\choose{\ell+1}} + (m-\ell+1)\cdot\left(n-{{m}\choose{\ell}}\right)\right]\cdot r
    \label{equation: uniform cost tiering}
    \end{equation}
    \label{lemma: horizontal tiering cost}
\end{lemma}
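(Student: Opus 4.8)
The plan is to prove Lemma~\ref{lemma: horizontal tiering cost} jointly with Lemma~\ref{lemma: best i tiering} by a simultaneous double induction: an outer induction on the number of levels $\ell$ and, for each fixed $\ell$, an inner strong induction on the number of buffer flushes $n$, driven by the recurrence \eqref{equation: uniform sub tiering} of Lemma~\ref{lemma: uniform sub tiering}. The two base cases are immediate: when $n=1$ both sides of \eqref{equation: uniform cost tiering} are $0$, and when $\ell=1$ the right-hand side collapses to $\binom{n}{2}r$, matching Case~2 of Lemma~\ref{lemma: uniform sub tiering}. For the inductive step at $(n,\ell)$ with $n,\ell>1$, the outer hypothesis gives the closed form for $\tau(i,\ell-1)$ for every $i$, and the inner hypothesis gives it for $\tau(n-i,\ell)$ for every $i\ge 1$ (since then $n-i<n$); so it suffices to analyze $g(i):=\tau(i,\ell-1)+(n-i)\,r+\tau(n-i,\ell)$ over $i\in[1,n-1]$, show that its minimum equals the right-hand side of \eqref{equation: uniform cost tiering}, and show that the minimizing set is exactly the interval described in Lemma~\ref{lemma: best i tiering}.

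The first step is to read off from the closed form that $\tau(\cdot,\ell)$ is \emph{discretely convex} in its first argument. Writing $\mu_\ell(j)$ for the regime index with $\binom{\mu_\ell(j)}{\ell}\le j\le\binom{\mu_\ell(j)+1}{\ell}$, the inductive formula makes the forward difference $\tau(j+1,\ell)-\tau(j,\ell)$ equal to $(\mu_\ell(j)-\ell+1)\,r$ (one must first check that the two regime expressions agree at each boundary $j=\binom{\mu}{\ell}$, which follows from $\ell\binom{\mu-1}{\ell}=(\mu-\ell)\binom{\mu-1}{\ell-1}$ together with Pascal's rule), and this difference is nondecreasing in $j$. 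Hence $g$, being a sum of discretely convex functions of $i$, is discretely convex, so its minimizer set is an interval; and a one-line computation yields the clean identity
\[
g(i+1)-g(i)=\bigl(\mu_{\ell-1}(i)-\mu_{\ell}(n-1-i)\bigr)\,r .
\]
Thus $g$ is non-increasing while $\mu_{\ell-1}(i)<\mu_{\ell}(n-1-i)$ and non-decreasing once $\mu_{\ell-1}(i)\ge\mu_{\ell}(n-1-i)$, so the set of minimizers of $g$ is precisely the set of $i$ at which these two regime indices coincide.

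It then remains to identify this set with the binomial intervals of Lemma~\ref{lemma: best i tiering} and to evaluate $g$ on it. Letting $m:=\mu_\ell(n)$, one shows with Pascal's rule and Vandermonde's identity that $\mu_{\ell-1}(i)=\mu_{\ell}(n-1-i)=m-1$ exactly when $i\in[\binom{m-1}{\ell},\binom{m}{\ell}]$ and simultaneously $n-i\in[\binom{m-1}{\ell-1},\binom{m}{\ell-1}]$; the pair of intervals is consistent because adding the endpoints recovers $\binom{m}{\ell}\le n\le\binom{m+1}{\ell}$, which is how $m$ was chosen. Finally, plugging any such $i^*$ into $g$ and substituting the inductive closed forms for $\tau(i^*,\ell-1)$ and $\tau(n-i^*,\ell)$ — both now in their $(m-1)$-regimes — the binomial terms recombine, via $\binom{m}{\ell+1}=\binom{m-1}{\ell+1}+\binom{m-1}{\ell}$ together with $\ell\binom{m-1}{\ell}=(m-\ell)\binom{m-1}{\ell-1}$, to exactly $\bigl[\ell\binom{m}{\ell+1}+(m-\ell+1)\bigl(n-\binom{m}{\ell}\bigr)\bigr]r$, closing the induction. (Combined with Lemma~\ref{lemma: best i tiering}, this then gives Theorem~\ref{lemma: horizontal tiering optimality}, since by Lemma~\ref{lemma: binomial volume} the sequence emitted by Algorithm~\ref{alg: horizontal tiering} realizes exactly this recursive split at every level.)

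I expect the main obstacle to be the bookkeeping in the last step: fixing the regime boundaries with the correct open/closed conventions and handling the degenerate cases where $n$ or $i$ equals a binomial coefficient (so a minimizer lies on a regime boundary and the two candidate values of $\mu_\ell$ must be reconciled), and then carrying out the binomial algebra so that the two $(m-1)$-regime subproblem values recombine cleanly into the single $m$-regime value \emph{uniformly} over the whole range $\binom{m}{\ell}\le n\le\binom{m+1}{\ell}$. None of the identities used are deep — all are consequences of Pascal's rule and the hockey-stick identity — but assembling them without off-by-one errors is the delicate part.
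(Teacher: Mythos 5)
Your proposal is correct in substance and shares the paper's overall skeleton: the same recurrence from Lemma~\ref{lemma: uniform sub tiering}, the same base cases $n=1$ and $\ell=1$, a double induction on $(\ell,n)$, and the same final binomial recombination after evaluating $g(i):=\tau(i,\ell-1)+(n-i)\,r+\tau(n-i,\ell)$ at an $i^*$ for which both subproblems sit in their $(m-1)$-regimes (the value is indeed independent of $i^*$ inside that window, as your cancellation shows). Where you genuinely diverge is the optimality (lower-bound) step: the paper takes an arbitrary competitor $j$, introduces its own regime indices $a,b$ with $\binom{a-1}{\ell-1}\le j\le\binom{a}{\ell-1}$ and $\binom{b-1}{\ell}\le n-j\le\binom{b}{\ell}$, and verifies $g(j)-g(i^*)\ge 0$ by a direct algebraic estimate; you instead extract discrete convexity of $\tau(\cdot,\ell)$ from the closed form (forward difference $(\mu_\ell(j)-\ell+1)r$, consistent across regime boundaries), obtain the clean identity $g(i+1)-g(i)=\bigl(\mu_{\ell-1}(i)-\mu_{\ell}(n-1-i)\bigr)r$, and read off the minimizer window from the sign change. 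Your route is less computation-heavy, makes the unimodality structure explicit, and delivers Lemma~\ref{lemma: best i tiering} as an immediate corollary; the paper's route avoids any discussion of convexity and boundary/degenerate cases at the price of a terser, more error-prone inequality chase. (Vandermonde is not actually needed anywhere; Pascal's rule plus the absorption identity $\ell\binom{m}{\ell}=(m-\ell+1)\binom{m}{\ell-1}$ suffice.)

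One slip to fix before writing this up: in the identification step you state that $\mu_{\ell-1}(i)=\mu_\ell(n-1-i)=m-1$ exactly when $i\in\bigl[\binom{m-1}{\ell},\binom{m}{\ell}\bigr]$ and $n-i\in\bigl[\binom{m-1}{\ell-1},\binom{m}{\ell-1}\bigr]$, but this contradicts your own definition of $\mu_{\ell-1}$ and your later (correct) claim that both subproblems are in their $(m-1)$-regimes. Since $i$ is the argument of $\tau(\cdot,\ell-1)$, the condition $\mu_{\ell-1}(i)=m-1$ reads $\binom{m-1}{\ell-1}\le i\le\binom{m}{\ell-1}$, and $\mu_\ell(n-1-i)=m-1$ gives $\binom{m-1}{\ell}\le n-i\le\binom{m}{\ell}$ (up to the boundary conventions you already flag); only with this orientation do the $i^*$ terms cancel and the two closed forms recombine into $\bigl[\ell\binom{m}{\ell+1}+(m-\ell+1)(n-\binom{m}{\ell})\bigr]r$. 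Note that the paper itself is inconsistent on this point: the displayed condition in the main-text proof sketch of Theorem~\ref{lemma: horizontal tiering optimality} and the algebra inside the appendix proof use the correct orientation, while the statement of Lemma~\ref{lemma: best i tiering} in the appendix has the two intervals swapped, so treat this as a typo to correct rather than a competing convention.
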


\begin{proof}
When $\ell=1$, $m$ equals to either $n$ or $n-1$. 
If $m=n$, the RHS of Equation~\ref{equation: uniform cost tiering} reduces to
\[
\left[1\cdot {{n}\choose{2}}+(n+1-1)\cdot(n-n)\right]\cdot r = {{n}\choose{2}}\cdot r.
\]
Otherwise when $m=n-1$, the RHS of Equation~\ref{equation: uniform cost tiering} also reduces to
\[
\left[1\cdot {{n-1}\choose{2}}+(n-1)\cdot(n-(n-1))\right]\cdot r = {{n}\choose{2}}\cdot r.
\]
According to Lemma~\ref{lemma: uniform sub tiering}, Equation~\ref{equation: uniform cost tiering} holds when $\ell = 1$.

When $n=1$, $m$ equals to either $\ell$ or $\ell-1$. 
If $m=\ell$, the RHS of Equation~\ref{equation: uniform cost tiering} reduces to
\[
\left[\ell\cdot 0 + (\ell+1)\cdot (1-1)\right]\cdot r = 0
\]
Otherwise when $m=\ell - 1$, the RHS of Equation~\ref{equation: uniform cost tiering} reduces to
\[
\left[\ell\cdot 0 + (\ell-\ell) \cdot (1-0)\right]\cdot r = 0
\]
According to Lemma~\ref{lemma: uniform sub tiering}, Equation~\ref{equation: uniform cost tiering} holds when $n = 1$. 

We will now use mathematical induction to prove that Equation~\ref{equation: uniform cost tiering} holds in the general case. 
When Equation~\ref{equation: uniform cost tiering} holds true for all $n<n'$ or $\ell<\ell'$,
our objective is to prove it also holds for $\tau(n',\ell')$.
Let $i^*$ be an integer that satisfies
\begin{align*}
    {{m-1}\choose{\ell}}\leq i^*\leq{{m}\choose{\ell}}\\
    {{m-1}\choose{\ell-1}}\leq n-i^*\leq{{m}\choose{\ell-1}}
\end{align*}
There is always at least one valid value for $i^*$, because
\[
{{m-1}\choose{\ell}}+{{m-1}\choose{\ell-1}}={{m}\choose{\ell}},\ and\ {{m}\choose{\ell}}+{{m}\choose{\ell-1}}={{m+1}\choose{\ell}}
\]
According to Lemma~\ref{lemma: uniform sub tiering}, we have
\begin{align*}
&\tau(n,\ell)\leq \tau_(i^*, \ell - 1)+(n-i)\cdot \tau+\tau_(n-i^*, \ell)\\
=& \left[(\ell-1) \cdot {{m-1}\choose{\ell}} + (m - 1 - (\ell - 1)+1)\cdot\left(i-{{m-1}\choose{\ell-1}}\right) \right]\cdot r\\
&+(n-i^*)\cdot r \left[(\ell \cdot {{m-1}\choose{\ell + 1}} + (m-\ell +2)\cdot\left(n-i^*-{{m-1}\choose{\ell}}\right) \right]\cdot r\\
=&\left[\ell\cdot\left[{{m-1}\choose{\ell}}+{{m-1}\choose{\ell+1}}\right]+(m-\ell)\cdot\left(n-{{m-1}\choose{\ell-1}}-{{m-1}\choose{\ell}}\right)\right]\cdot r\\
&+\left[n-{{m-1}\choose{\ell-1}}- {{m-1}\choose{\ell}} \right]\cdot r\\
=&\left[\ell \cdot {{m}\choose{\ell+1}} + (m-\ell+1)\cdot\left(n-{{m}\choose{\ell}}\right)\right]\cdot  r
\end{align*}

The above inequality gives that the RHS of Equation~\ref{equation: uniform cost tiering} is an upper bound of $\tau(n, \ell)$. It remains to prove that the RHS of  Equation~\ref{equation: uniform cost tiering} is also a lower bound of $\tau(n, \ell)$.
Specifically, we will show that, for any $j\in [1,n-1]$, 
\[
\tau(i^*, \ell-1)+(n-i)^*\cdot r+\tau(n-i^*, \ell) \leq \tau(j, \ell-1)+(n-j)\cdot r+\tau(n-j, \ell),
\]assuming $j$ satisfies that 
\begin{align*}
    {{a-1}\choose{\ell-1}}&\leq j\leq{{a}\choose{\ell-1}}\\
    {{b-1}\choose{\ell}}&\leq n-j\leq{{b}\choose{\ell}}
\end{align*}
Because ${{a-1}\choose{\ell}}+{{b-1}\choose{\ell-1}}=n$, when $a \leq m$, we have $b \geq m$.
This gives us 
{
\begin{align*}
&\tau(j, \ell-1)+(n-j)\cdot r+\tau(n-j, \ell) - \tau(i^*, \ell-1)-(n-i^*)\cdot r-\tau(n-i^*, \ell) \\
= &\left[\ell\cdot  {{b}\choose{\ell+1}} + (b-\ell+1)\left(n-j-{{b-1}\choose{\ell}}\right)\right]\cdot r\\
-& \left[\ell\cdot  {{m}\choose{\ell+1}} - (m-\ell+1)\left(n-i^*-{{m-1}\choose{\ell}}\right) -  (i^*-j) \right]\cdot r\\
-& \left[(\ell-1)  {{m}\choose{\ell}} + (m-\ell)\left(i^*-{{m-1}\choose{\ell-1}}\right)
- (\ell-1)  {{a}\choose{\ell}} - (a-\ell)\left(j-{{a-1}\choose{\ell-1}}\right)\right]\cdot r\\
\geq &[(m+1) (i^*-j) - (m+1) (i^* - j)]\cdot r = 0.
\end{align*}}
Similarly, when $a>m$, we can derive the same result with analogous reasoning.
Therefore, the RHS of Equation~\ref{equation: uniform cost tiering} is also a lower bound of $\tau(i,\ell)$.
Hence, Equation~\ref{equation: uniform cost tiering} always holds.
This also demonstrates that no other value of $i$ in Equation~\ref{equation: uniform sub tiering} can result in a smaller value than when $i = i^*$, thereby proving Lemma~\ref{lemma: best i tiering}.
\end{proof}

\begin{proof}[Proof of Theorem~\ref{lemma: horizontal tiering optimality}]
When \( N = \binom{k + \ell - 1}{\ell} \cdot B \), the corresponding problem is \( \psi\left( \binom{k + \ell - 1}{\ell}, \ell \right) \). We define \( S_h(N, \ell) \) as the horizontal-tiering compaction sequence for this problem, which is the sequence adopted by Algorithm~\ref{alg: horizontal tiering}.
Our proof proceeds as follows: (1). Lemma~\ref{lemma: best i tiering} indicate that the timing \( i_f \) of the first compaction \( p_f \) to the largest level in \( S_h(N, \ell) \) is optimal for the problem \( \psi\left( \binom{k + \ell - 1}{\ell}, \ell \right) \).
(2). Let \( S_1 \) be the portion of \( S_h(N, \ell) \) before \( p_f \), and \( S_2 \) be the portion after \( p_f \). We show that:
\[
S_1 = S_h\left( \binom{k + \ell - 2}{\ell - 1} \cdot B, \ell - 1 \right) = S_h\left( i_f \cdot B, \ell - 1 \right) ,
\]
and
\[
S_2 =S_h\left( \binom{k + \ell - 2}{\ell } \cdot B, \ell \right) = S_h\left( N - i_f \cdot B, \ell \right).
\]
This indicates that \( S_1 \) and \( S_2 \) correspond to the horizontal-tiering compaction sequences for their respective data sizes and levels.
(3).  For the two subproblems resulting from partitioning \( \psi\left( \binom{k + \ell - 1}{\ell}, \ell \right) \) at \( i_f \)—specifically, \( \psi\left( i_f, \ell - 1 \right) \) and \( \psi\left( n - i_f, \ell \right) \)—the sequences \( S_1 \) and \( S_2 \) are their horizontal-tiering compaction sequences.

By combining these points, we can recursively demonstrate that all compactions in the sequence produced by Algorithm~\ref{alg: horizontal tiering} correspond one-to-one with those in the optimal compaction sequence of the problem \( \psi\left( n, \ell \right) \). Therefore, Algorithm~\ref{alg: horizontal tiering} yields the optimal compaction sequence for the given problem.

\end{proof}

\subsection{Proof of Lemma~\ref{lemma: rt}}
\begin{proof}
Theorem~\ref{lemma: horizontal tiering optimality} illustrated that with the horizontal-tiering algorithm, the horizontal part in {\ourmethod} adopts the optimal compaction sequence. Therefore Equation~\ref{equation: rt} always holds according to Lemma~\ref{lemma: horizontal tiering cost}.
\end{proof}

\subsection{Proof Sketch of Lemma~\ref{lemma: wl}}
Under the leveling merge policy, as data entries flux in, each buffer flush to Level 1 incurs an I/O cost of \( \frac{D_1(i)}{P} \), where \( D_1(i) \) represents the size of Level 1 after the buffer flush, and \( P \) is the disk page size. Between consecutive buffer flushes, compactions can be performed between adjacent levels to control the size of Level 1 and other levels, reducing future write costs. A compaction that starts from Level \( l_1 \) and ends at Level \( l_2 \) incurs a cost of \( \sum_{j=l_1}^{l_2} \frac{D_j(i)}{P} \), merging all data from Levels \( l_1 \), \( l_1+1 \), ..., \( l_2-1 \) into Level \( l_2 \). Levels \( l_1 \) through \( l_2-1 \) into Level \( l_2 \).
Different growth schemes result in varying compaction timings, which ultimately lead to different costs. For example, if compactions are too infrequent or not performed at all, data accumulates in Level 1, making each buffer flush increasingly costly. 
The timing of each compaction can be represented by a triple \( S = (I, l_1, l_2) \), where \( I \) denotes that the compaction occurs after the \( I \)-th buffer flush, and \( l_1 \) and \( l_2 \) indicate the starting and ending levels, respectively. This scenario can then be formulated into the following problem.
\footnote{Note that under the leveling merge policy, when the compaction counters of several consecutive levels are equal, a compaction to the initial level will sequentially trigger compactions at each subsequent level. In such cases, we can merge these consecutive compactions into a single, unified compaction, which will slightly reduce the overall write amplification. Specifically, this approach reduces the write amplification by one at each level because each entry will be involved in exactly one consecutive compaction as it passes through each level—that is, the compaction that triggers the compaction from that level to the next originates from the compaction at the preceding level. In our analyses, we assume that this optimization is employed.}

\noindent\begin{problem}
    Consider a workload that would incur a total of \( n \) sequential buffer flushes in the LSM-tree, which has a maximum of \( \ell \) levels. We define the problem as $\psi(n, \ell)$.
    Our objective is to minimize the total write cost of processing this workload by determining an optimal compaction sequence \( \{S_1, S_2, \dots\} \). 
\label{problem: uniform leveling}
\end{problem}

After formulating the cost of different compaction sequences of the LSM-tree under leveling merge policy into Problem 2
We can prove Lemma~\ref{lemma: wl} following similar procedures in the proof of Theorem~\ref{lemma: horizontal tiering optimality} and Lemma~\ref{lemma: rt}.

\subsection{Derivation of Equation~\ref{equation: delta} in Section~\ref{Sec: skewed}}

Assume that we perform a total of \(k\) compactions in the first level, with the current threshold set at \(\delta-1\). In this configuration, the first compaction in the sequence occurs after \(\delta\) buffer flushes. Increasing the threshold from \(\delta-1\) to \(\delta\) effectively removes the compaction triggered after the first \(\delta\) flushes and increments the required number of flushes for the final \(\delta\) compactions by one. Consequently, this threshold increment reduces the write cost by eliminating one compaction; however, it also introduces additional overhead due to the increased cost of the final \(\delta\) compactions. This benefit can be represented as

\[
B = \alpha + \delta\cdot(1-\alpha) + \delta\cdot(1-\alpha)\cdot(k-1) + \sum_{i=1}^{\delta}[(\alpha + i\cdot(1-\alpha)],
\]

while the corresponding cost is represented as

\[
C = \sum_{i=1}^{\delta}\left[\alpha + (\delta+k-i+1)\cdot(1-\alpha)\right] + \sum_{i=1}^{\delta} i\cdot(1-\alpha).
\]

Subtracting the cost from the benefit, we obtain the following net change in write cost:  
\begin{align*}
\Delta {\text{cost}} & = B - C \\ 
& = \frac{\delta\cdot(\delta-1)}{2}\cdot(1-\alpha)-\delta^2\cdot(1-\alpha)+\alpha\\
& = \alpha - \frac{\delta\cdot(\delta+1)}{2}\cdot(1-\alpha).
\end{align*}

\end{document}